\newtheorem{thm}{Theorem}
\newtheorem{lemma}{Lemma}[section]
\newtheorem{prop}[lemma]{Proposition}
\newtheorem{cl}[lemma]{Claim}
\newtheorem{df}[lemma]{Definition}
\begin{document}

\title{Resonant delocalization on the Bethe strip}
\author{Mira Shamis}
\address{Department of Mathematics, Princeton University, Princeton,
NJ 08544, USA}
\email{mshamis@princeton.edu}
\thanks{This research was supported by NSF grant PHY-1104596.}

\begin{abstract}
Recently, Aizenman and Warzel discovered a mechanism for the appearance of absolutely continuous spectrum 
for random Schr\"odin\-ger operators on the Bethe lattice through rare resonances (resonant delocalization).
We extend their analysis to operators with matrix-valued random potentials drawn from ensembles such as the Gaussian
Orthogonal Ensemble. These operators can be viewed as random operators on the Bethe strip, 
a graph (lattice) with loops. 
\end{abstract}
\maketitle

\section{Introduction}

Let $\mathcal{T}$ be a regular rooted tree with branching number $K > 1$ (Bethe lattice). We shall be interested 
in random Schr\"odinger operators on the Cartesian product $\mathcal{T} \times G$ of $\mathcal{T}$
and a finite graph $G$ with $W$ vertices (Bethe strip). Equivalently, these can be seen as random Schr\"odinger
operators on $\mathcal{T}$ with matrix-valued potential. The precise definition is as follows:
$H = H_{\lambda,\omega}$ is a random operator acting on 
\[ \ell^2(\mathcal{T} \times G) = \ell^2(\mathcal{T} \to \mathbb{R}^W)~, \] 
and given by the matrix elements
\begin{equation}\label{eq:def.H}
H_{\lambda,\omega} (x, y) = \begin{cases}
				  \mathbbm{1}_{W \times W}~, & x \sim y  \,\,(\text{$x$ is adjacent to $y$})\\
				  A + \lambda V_\omega(x)~.  & x = y \\
				  0~, &\text{otherwise}
                               \end{cases}~,
\quad x,y \in \mathcal{T}~. 
\end{equation}
Here $\lambda \geq 0$ is a coupling constant, $\omega$ denotes an element of the probability space, 
$A$ is a fixed $W \times W$ Hermitian matrix, and $V_\omega(x)$ are independent identically distributed
$W \times W$ random matrices. The potential $A + \lambda V_\omega(x)$ will be denoted $U_\omega(x)$.

The question that we shall address is, what is the spectral type of $H$ when $\lambda$ is small. Before 
stating our results, let us review what was previously known. 

\vspace{1mm}\noindent
For the Bethe lattice ( $W=1$, $A = 0$ in our notation), the spectrum of the unperturbed operator ($\lambda = 0$) is purely absolutely
continuous and fills the interval $[-2 \sqrt{K}, 2\sqrt{K}]$. Under mild assumptions on the potential, Klein
showed \cite{Kl1,Kl2,Kl3} that, for small $\lambda > 0$, the spectrum in $[-2\sqrt{K}+\epsilon, 2\sqrt{K}-\epsilon]$
is also (almost surely) absolutely continuous. Additional proofs and generalizations of this result were found by Aizenman,
Sims, and Warzel \cite{ASW}, and by Froese, Hasler, and Spitzer \cite{FHS}.

On the other hand, Aizenman proved \cite{A} that, for small $\lambda$, the spectrum of $H$ outside
$[-K-1-\epsilon, K+1+\epsilon]$ is almost surely pure point. 

In the recent work \cite{AW}, Aizenman and Warzel proved the presence of absolutely continuos spectrum thoroughout
the interval $[-K-1+\epsilon, K+1-\epsilon]$. They found a new mechanism for the appearance of absolutely
continuous spectrum, entirely different from the one appearing inside the spectrum of the unperturbed operator, and
coined the term ``resonant delocalization'' for it. As opposed to the absolutely continuous spectrum in the
interval $[-2\sqrt{K}, 2\sqrt{K}]$, which appears due to the stability of the absolutely continuous spectrum
on the Bethe lattice, the absolutely continuous spectrum in $[-K-1,K+1] \setminus [-2\sqrt{K}, 2\sqrt{K}]$
(in the Lifshitz tails) appears due to resonances between distant sites. The interval $[K-1,K+1]$ is exactly
the $\ell^1$ spectrum of the unperturbed operator; the importance of the $\ell^1$ spectrum is further discussed
in \cite{AW} and in the survey \cite{W} by Warzel.

\vspace{2mm}\noindent
The goal of this present work is to extend the result of \cite{AW} to the case $W > 1$ of the Bethe strip.
We make use of significant parts of the work \cite{AW}; for the reader's convenience, we denote by
Statement~X* the generalization of \cite[Statement~X]{AW}.

Denote by $\{\nu_i\}_{i=1}^W$ the eigenvalues of $A$, and let 
\[ S_\epsilon  = \bigcup_i \left[ \nu_i - (K+1) + \epsilon, \, \nu_i + (K+1) - \epsilon \right]~. \]
Our main result is 
\begin{thm}[Corollary~2.3*]\label{thm:main}
Assume that $V_\omega(x)$ are drawn from the Gaussian Orthogonal Ensemble (GOE). For any $\epsilon > 0$
any open interval $I \subset S_\epsilon$ almost surely has absolutely continuous spectrum of $H_{\lambda,\omega}$
in it, when $\lambda>0$ is sufficiently small.
\end{thm}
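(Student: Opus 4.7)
The proof proceeds by adapting the Aizenman--Warzel resonant delocalization scheme to the matrix-valued setting. As in the scalar case, the starting point is the Schur complement recursion for the $W \times W$ forward self-energies along the tree,
\[ \Gamma_x(z) \;=\; \sum_{y \succ x}\bigl(U_\omega(y) - z - \Gamma_y(z)\bigr)^{-1}, \]
which expresses the matrix-valued Green's function at the root in terms of independent copies of a fixed distributional equation. The boundary values $\Gamma_x(E+i0)$ exist almost surely and depend only on the forward subtree, which is the matrix version of the fluctuation--dissipation structure that underlies \cite{AW}.

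The plan then has three main steps. First, I would prove a matrix analogue of the Aizenman--Warzel regularity lemma for the distribution of $\operatorname{Im}\Gamma_x(E+i0)$: under the GOE assumption the single-site law is invariant under orthogonal conjugation and has a smooth density on the symmetric $W \times W$ matrices, and this propagates through the recursion to yield a bounded density for $\det(U_\omega(x) - E - \Gamma_x(E+i0))$ near zero. Second, I would decompose the off-diagonal Green's function $G(o,x;z)$ along the unique path from the root $o$ to $x$; its size is governed by a product of matrix transfer factors, and the relevant Lyapunov exponent equals $\tfrac12 \log K$ precisely when $E$ lies in one of the $\ell^1$-bands $[\nu_i - (K+1), \nu_i + (K+1)]$ around an eigenvalue of $A$. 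This is what dictates the form of $S_\epsilon$ in the statement. Third, I would invoke the resonant delocalization criterion of \cite{AW}: once the expected number of sites $x$ at generation $n$ that are in $K^{-n}$-resonance with a given energy $E \in I$ grows as a positive power of $K^n$, one almost surely constructs square-summable approximate eigenfunctions concentrated on distant resonating sites, and ergodicity then upgrades this to absolutely continuous spectrum throughout $I$.

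The main obstacle I expect is the resonance-counting step in the matrix setting. In the scalar case the resonance condition is the vanishing of a single real random variable whose density is controlled by a one-dimensional regularity estimate; here one must instead control the joint distribution of the $W$ eigenvalues of $U_\omega(x) - E - \Gamma_x(E+i0)$, including their angular orientation, uniformly along all branches of the tree. The GOE hypothesis is used essentially at this point, both through the Weyl integration formula to separate radial and angular components and through orthogonal invariance to decouple $V_\omega(x)$ from $\Gamma_x(E+i0)$ conditionally on the forward subtree beyond $x$. Once this matrix regularity estimate is in place, the large-deviation analysis for products of transfer matrices along a branch and the combinatorial resonance count of \cite{AW} should carry over with only notational changes, yielding Theorem~\ref{thm:main}.
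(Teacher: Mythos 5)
Your high-level outline matches the Aizenman--Warzel template, but the step where you localize the difficulty --- the resonance count --- is proposed along a route that does not work, and the idea that actually makes the matrix case go through is missing. The paper does \emph{not} control the joint distribution of the $W$ eigenvalues of $U_\omega(x) - E - \Gamma_x(E+i0)$ (nor the smallness of its determinant): a near-zero determinant, or even a near-zero eigenvalue, does not produce a large Green function in the direction that propagates along the path from the root, so a determinant-based resonance condition cannot feed the first-moment bound. Instead, the resonance event is defined through the scalar quantity $\langle G(x,x;z)\tilde v,\tilde v\rangle$ where $\tilde v$ is the top singular vector of $G^{\mathcal{T}_x}(0,x_-;z)$; by the Schur--Banachiewicz formula this equals $(g-\sigma)^{-1}$ with $g=\lambda P V(x)P$ a \emph{scalar} Gaussian ($P$ the rank-one projection onto $\tilde v$) independent of $\sigma$, precisely because GOE decouples $PVP$ from the complementary blocks. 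This rank-one reduction replaces your Weyl-integration/angular-decomposition step and is where the off-diagonal disorder enters; the deterministic inequality $\|G(0,x)\|\ge \|G^{\mathcal{T}_x}(0,x_-)\|\,|\langle G(x,x)\tilde v,\tilde v\rangle|$ then converts the scalar resonance into growth of the full matrix Green function.

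Two further ingredients that you dismiss as ``notational changes'' in fact require new arguments when $A\neq 0$, because the fastest and slowest Lyapunov exponents then differ. First, the second-moment bound needs an a priori reduction of the event $\{\|G(x,x)\|\ge\tau\}$ to finitely many scalar events $|\langle G(x,x)e_j,e_j\rangle|\ge\tau/C$ and $|\langle G(x,x)(e_j+e_k),e_j+e_k\rangle|\ge\tau/C$, which uses that under the no-a.c.\ hypothesis $G(0,0;E+i0)$ is real symmetric; it also needs a two-sided factorization of $\mathbb{E}\|G^{\mathcal{T}_x}(0,x_-;z)\|^s$ along the path (Lemma~4.3 of the paper), whose lower bound again rests on a GOE estimate $\mathbb{E}|\langle (V-\sigma)^{-1}\phi,\psi\rangle|^s\ge c\|\phi\|^s\|\psi\|^s$ and a decoupling inequality for fractional-linear functions. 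Second, the correct threshold is $L_0(E)<\log K$ on the $\ell^1$ bands $\bigcup_i(\nu_i-(K+1),\nu_i+(K+1))$, not $\tfrac12\log K$ (that value marks the edge of the $\ell^2$ spectrum $|E-\nu_i|<2\sqrt K$); and the conclusion is reached not by building square-summable approximate eigenfunctions but by showing $\sum_x\|G(0,x;E+i0)\|^2=\infty$ and invoking a matrix Simon--Wolff criterion, together with a quadratic-form lower bound on $\tilde\Gamma(0)$ to upgrade absence of point spectrum to presence of a.c.\ spectrum.
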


Thus the mechanism of resonant delocalization from \cite{AW} may be extended to
the Bethe strip, a lattice with loops. See \cite[Section 4]{W} for a more general discussion of possible
further extensions.

\vspace{2mm}\noindent

Theorem \ref{thm:main} should also be compared with the result of Klein and Sadel \cite{KS} (and its ramification
\cite{KS2}), who proved, under weaker assumptions on the potential $V_\omega$, that the spectrum of 
$H_{\lambda,\omega}$ in
\[ S_\epsilon^-  = \bigcap_i \left[ \nu_i - 2\sqrt{K} + \epsilon, \, \nu_i + 2\sqrt{K} - \epsilon \right] \]
is almost surely purely absolutely continuous; the special case $K = W = 2$ was earlier considered by Froese,
Halasan, and Hasler \cite{FHH}. Thus we replace the intersection with union (i.e.\ the fastest
Lyapunov exponent with the slowest one) and $2\sqrt{K}$ with $K+1$ (i.e.\ the $\ell^2$ spectrum with the $\ell^1$
spectrum) at the price of more restrictive assumptions on $V_\omega$, and we only manage to show the existence
of absolutely continuous spectrum rather than its purity. The spectrum outside the set $S^-_{\epsilon}$ is pure point, 
as follows from the results of \cite{A}. Thus our result provides an additional example of the appearance of
absolutely continuous spectrum in the $\ell^1$ spectrum of the unperturbed operator $H_{0,\omega}$, well outside 
the $\ell^2$ spectrum.

\vspace{2mm}\noindent
Theorem~\ref{thm:main} will follow from Theorems~\ref{thm1} and \ref{thm2} below. Theorem~\ref{thm2} connects the presence of 
absolutely continuous spectrum with the (slowest) Lyapunov exponent $L = L_\lambda(E) \in \mathbb{R}_+$, 
which is defined in the sequel. Theorem~\ref{thm1}, which holds for any (independent identically distributed) 
random potential $U_\omega$  with $\mathbb{E} \log^+ \|U_\omega(x)\| < \infty$, guarantees that the assumptions of 
Theorem~\ref{thm1} are satisfied for small $\lambda$.

\begin{thm}\label{thm1}
For every $\epsilon > 0$ and any interval $I \subset S_\epsilon$ one has
\[ \mathrm{mes} \left\{ E \in I \, \mid \, L(E) < \log K \right\} > 0 \]
for sufficiently small $\lambda$.
\end{thm}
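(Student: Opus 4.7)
The plan is to reduce to the deterministic case $\lambda = 0$, where the Lyapunov exponent can be computed explicitly and turns out to be strictly less than $\log K$ throughout the interior of $S_0$, and then to pass to small $\lambda > 0$ by an upper semicontinuity argument.

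At $\lambda = 0$ the potential is the constant matrix $A$. Diagonalising $A = \sum_{i=1}^W \nu_i P_i$ decouples the eigenvalue equation $H_{0,\omega}\psi = E\psi$ into $W$ independent scalar Bethe lattice problems at shifted energies $E - \nu_i$. On each fiber, the spherical ansatz $\psi(x) = r^{|x|}$ yields $r^2 - (E-\nu_i)r + K = 0$, so the corresponding positive Lyapunov exponent is
\[ \ell(E - \nu_i) = \log \max\bigl(|r_i^+|, |r_i^-|\bigr), \]
which equals $\tfrac{1}{2}\log K$ on $[-2\sqrt{K}, 2\sqrt{K}]$ and $\log\bigl|\tfrac{E - \nu_i + \sqrt{(E-\nu_i)^2 - 4K}}{2}\bigr|$ outside it. The slowest positive Lyapunov exponent of the strip is then $L_0(E) = \min_i \ell(E-\nu_i)$. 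A direct calculation shows $\ell(\tilde E) = \log K$ exactly at $|\tilde E| = K+1$ and $\ell(\tilde E) < \log K$ strictly for $|\tilde E| < K+1$; consequently $L_0(E) \leq \log K - \delta(\epsilon)$ uniformly for $E \in S_\epsilon$, with some $\delta(\epsilon) > 0$.

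Next I would show that $\limsup_{\lambda \to 0+} L_\lambda(E) \leq L_0(E)$ for each fixed $E$. The sum of the top $k$ Lyapunov exponents of an i.i.d.\ matrix product is upper semicontinuous in the law of the factors, as it equals an infimum over $n$ of $\tfrac{1}{n}\mathbb{E}\log\|\Lambda^k T_n\cdots T_1\|$, each term being continuous in the distribution. At $\lambda = 0$ the transfer matrix $T_0(E)$ is a block-diagonal symplectic matrix whose eigenvalues are the $r_i^\pm$ above, and for $\lambda > 0$ one has $T_\lambda(E) = T_0(E) + O(\lambda)$. Perturbative analysis of the symplectic spectrum (isolating eigenvalues by modulus) identifies the limits of the individual Lyapunov exponents with the $\log|r_i^\pm|$, and in particular yields $L_\lambda(E) \to L_0(E)$.

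Combining the two ingredients, for every $E \in I \subset S_\epsilon$ one has $L_\lambda(E) < \log K$ as soon as $\lambda$ is sufficiently small (possibly depending on $E$). Applying Fatou's lemma to $\mathbbm{1}\{L_\lambda(E) < \log K\}$ then gives
\[ \liminf_{\lambda \to 0+} \mathrm{mes}\bigl\{E \in I : L_\lambda(E) < \log K\bigr\} \geq |I| > 0, \]
which is the claim. The main obstacle is the continuity of the \emph{slowest} positive Lyapunov exponent at $\lambda = 0$: while the sum of the top $W$ exponents is automatically upper semicontinuous, individual interior exponents need not be, so one needs a careful perturbative analysis of the symplectic matrix $T_\lambda(E)$, combined with a density argument in $E$ to handle the (negligible) exceptional energies at which distinct moduli $|r_i^\pm|$ coincide.
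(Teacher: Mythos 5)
Your overall strategy --- compute $L_0$ explicitly, show $L_0 < \log K$ strictly on $S_\epsilon$, and then let $\lambda \to 0$ --- is the same as the paper's, and the explicit computation of $L_0(E) = \min_i \ell(E-\nu_i)$ via the quadratic recursion is correct and matches what the paper invokes. The gap lies in the second ingredient, the behaviour of $L_\lambda$ as $\lambda \to 0$. You attempt to prove \emph{pointwise} upper semicontinuity $\limsup_{\lambda\to 0} L_\lambda(E) \le L_0(E)$ at real $E$, and you yourself flag that the slowest exponent is not automatically upper semicontinuous and that a ``careful perturbative analysis of the symplectic matrix $T_\lambda(E)$'' would be needed; that analysis is not carried out, and moreover the object invoked does not exist in this setting. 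On the Bethe lattice the decay of $G(0,x_n;z)$ is not governed by a $2W\times 2W$ symplectic transfer cocycle, as it would be on a $1$-D strip. By Claim~\ref{cl:1ident} the relevant product is of $W\times W$ forward Green functions $G^{\mathcal{T}_{x_{i-1}}}(x_i,x_i;z)$, which form a Markov chain (each factor depends on the entire subtree), and for $\lambda>0$ this chain is not an $O(\lambda)$ perturbation of the deterministic $\Gamma_0(z)$ in any pointwise sense; only its stationary law is close. Proving pointwise convergence of $L_\lambda(E+i0)$ at real boundary energies would require controlling this Markov chain at $\eta = 0$, which is precisely what is hard.

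The paper sidesteps this. Claim~\ref{cl:3conv} establishes $L_\lambda(z)\to L_0(z)$ only for $z\in\mathbb{C}^+$, where it follows from plain strong resolvent convergence. Claim~\ref{cl:6.1}, following \cite[Sec.~6.1]{AW}, upgrades this via harmonicity of $z\mapsto L_\lambda(z)$ and Fatou to $L^1_{\mathrm{loc}}$ convergence of boundary values: $\int_I L_\lambda(E)\,dE \to \int_I L_0(E)\,dE$. That is all that is needed: since $L_\lambda\ge 0$ and $\int_I L_0 < (\log K - \delta(\epsilon))|I|$, one has $\int_I L_\lambda < (\log K)|I|$ for small $\lambda$, which by Chebyshev already forces $\mathrm{mes}\{E\in I : L_\lambda(E) < \log K\}>0$. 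So the paper's route is genuinely weaker (integral rather than pointwise convergence) and thereby avoids both issues with your proposal. If you wish to salvage your Fatou-on-indicators step, you should replace your pointwise-limit claim with this integral convergence and finish with Chebyshev as above.
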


It is probably true that for $\lambda < \lambda_0(\epsilon)$ one has $L|_{S\epsilon} < \log K$; this is
however unsettled even for $W=1$ (except for the special case of Cauchy disorder, see \cite{AW}).

\vspace{2mm}\noindent
In the next two theorems, we assume that $V_\omega(x)$ are drawn from the Gaussian Orthogonal Ensemble (GOE).
We shall comment on possible generalizations in the sequel.
\begin{thm}[Theorem~2.1*]\label{thm2}
The absolutely continuous spectrum of $H$ fills (almost surely) the set $\left\{ E \, \mid \, L(E) < \log K \right\}$,
meaning that the restriction of the Lebesgue measure to this set is almost surely absolutely continuous with respect
to the absolutely continuous part of the spectral measure of $H$. In particular, this set is a subset of the absolutely
continuous spectrum of $H$.
\end{thm}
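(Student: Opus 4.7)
The plan is to follow the resonant delocalization strategy of Aizenman--Warzel and adapt it to the matrix-valued setting. The starting point is the standard spectral-theoretic reformulation: it suffices to show that for almost every energy $E$ in $\{L(E) < \log K\}$ the (matrix) Green's function at the root $G(E+i0; o, o)$ almost surely has bounded positive imaginary part with positive probability, and then apply a Simon--Wolff/Kotani-type criterion to produce absolutely continuous spectrum of full Lebesgue mass. Concretely, I would seek $E$-integrable bounds on $\mathrm{Im}\, \mathrm{tr}\, G(E+i\eta; o, o)$ as $\eta \downarrow 0$ on a fixed interval $I \Subset \{L(E) < \log K\}$, since a.c.\ spectrum of total weight $|I|$ follows from uniform $L^1(I)$ control of the Green's function near the real axis.

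The mechanism is rare resonances between the root and distant boundary sites. I would work with the truncated tree $\mathcal{T}_n$ of depth $n$ and its matrix Green's function $G^{(n)}(E+i\eta; o, o)$, viewing the infinite-volume Green's function as arising when the boundary is dressed with random self-energy. Using the recursion on $\mathcal{T}$, one can express $G^{(n)}(E;o,o)^{-1}$ as $E - U_\omega(o) - \Sigma^{(n)}(E)$, where $\Sigma^{(n)}(E)$ is a sum over the $K^n$ branches of forward self-energies built by iterating the one-step recursion $\Gamma_{k+1}^{-1} = E + i\eta - U_\omega(x) - \sum_{y} \Gamma_k^{(y)}$. The quantities $\|\Gamma_n\|$ typically decay like $e^{-nL}$ along a branch, by definition of the slowest Lyapunov exponent $L(E)$; however, on any given branch there is a probability $\sim e^{-2nL}\,\eta$ of a near-resonance (an eigenvalue of the truncated branch Hamiltonian lying within $\eta e^{-2nL}$ of $E$), which makes the corresponding contribution to $\Sigma^{(n)}$ of order $1$. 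Summing over $K^n$ independent branches produces an $\eta$-independent positive imaginary part when $K^n e^{-2nL} \gg 1$, i.e.\ exactly when $L < \log K$. Taking $n$ so that $\eta e^{2nL}K^{-n}$ is moderate and then sending $\eta \downarrow 0$ yields the desired lower bound, while a second-moment estimate (Chebyshev on $\Sigma^{(n)}$) keeps the number of resonant branches from being too large and gives the upper bound needed to stay in $L^1(I)$.

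To carry this out in the strip setting, I would replace the Cauchy/Lifshitz-tails computations of \cite{AW} by their $W\times W$ analogs. The GOE hypothesis enters through two features: the rotational invariance of $V_\omega(x)$ under conjugation by $O(W)$, which lets me reduce distributional questions about $\Gamma_n$ to the spectrum of truncated branch Hamiltonians; and the Gaussian tail, which provides the quantitative Wegner-type estimate needed to lower bound the probability of a near-resonance by a multiple of $\eta \times (\text{density of states of the branch Hamiltonian at }E)$. The relevant eigenvalues of the branch Hamiltonians are distributed over a set containing $\mathrm{spec}(A) + [-K-1, K+1]$, which is precisely $S_0$, so the density of states is strictly positive throughout $S_\epsilon$; this is what makes the whole set $S_\epsilon$ available rather than just $S_\epsilon^-$. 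The computation of $\|\Gamma_n\|$ in terms of the slowest Lyapunov exponent uses the multiplicative ergodic theorem and the matrix transfer operator on $GL(2W,\mathbb{R})$ in the standard way.

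The main obstacle, as in \cite{AW}, is the simultaneous two-sided control of $\mathrm{Im}\, \mathrm{tr}\, G^{(n)}$: the resonance count must be large enough to produce a positive imaginary part as $\eta \downarrow 0$, yet the second moment of $\Sigma^{(n)}$ must remain bounded so that the distribution of $\mathrm{Im}\,\mathrm{tr}\, G^{(n)}$ does not escape to infinity. In the scalar case this balance rests on explicit Cauchy-like identities; in the matrix case one must handle non-commutativity of transfer matrices and avoid assuming a spectral gap between Lyapunov exponents. I expect the cleanest route is to project onto the slowest Oseledets direction before applying the second-moment argument, and to use the $O(W)$-invariance of the GOE law of $V_\omega$ to replace non-commutative transfer by its average. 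Once the moment bounds are in place, the conclusion of Theorem~\ref{thm2} follows from the general Kotani-style translation between positive a.s.\ imaginary Green's function and absolute continuity of the spectral measure with respect to Lebesgue measure on $\{L < \log K\}$.
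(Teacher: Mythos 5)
The overall strategy you sketch — resonant delocalization via a first/second moment argument on rare resonances at scale $n$, using the GOE's conditional a.c.\ property to produce a Wegner-type probability per branch, balanced against the $K^n$ branching — is indeed the strategy the paper follows. However, two pieces of machinery that the paper depends on are absent from your outline, and their absence is a genuine gap rather than a matter of presentation.

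First, the paper does not attempt to bound moments of $\mathrm{Im}\,\mathrm{tr}\, G^{(n)}(o,o)$ or of the self-energy $\Sigma^{(n)}$ directly; these quantities have heavy tails near resonances and an $L^1(I)$ estimate on them is not what is established. Instead, the second-moment method is applied to a \emph{counting variable} $N=\sum_{x\in S_n}\mathbbm{1}_{I_x\cap R_x\cap E_x}$, where $R_x$ controls $\|G^{\mathcal{T}_x}(0,x_-;E+i\eta)\|$, $E_x$ is the resonance event $\big|\langle G(x,x)\,v,\,w\rangle\big|\ge\tau$ for carefully chosen directions $v=w_{\max}(\tilde\Gamma(y))$ and $w=w_{\max}(G^{\mathcal{T}_x}(0,x_-)^*G^{\mathcal{T}_x}(0,x_-))$, and $I_x$ is a quantile condition on $\|\tilde\Gamma(x)\|$. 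One then needs the deterministic operator inequality
\[ \tilde\Gamma(0)\ \ge\ \sum_{x\in S_n}\sum_{y\in\mathcal{N}_x^+} G(0,x;E+i\eta)\,\tilde\Gamma(y)\,G(0,x;E+i\eta)^*, \]
proved by iterating the resolvent identity, to convert $N\ge 1$ into a lower bound on $\|\tilde\Gamma(0)\|$. Without this transport inequality your heuristic ``each resonance contributes $\sim 1$ to $\mathrm{Im}\,G$'' has no rigorous backbone, because the resonance produces a large off-diagonal Green's function, not directly a contribution to $\tilde\Gamma(0)$. Also note that in the paper the resonance probability is $\sim\tau^{-1}\sim e^{-(L+2\delta)n}$, with no $\eta$ factor; the balance against $K^n$ is at fixed $E+i0$ rather than in a coupled $\eta\to 0$, $n\to\infty$ limit, so your scaling $K^n e^{-2nL}\eta$ is not the one that is used.

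Second, and more fundamentally, showing that $\mathrm{Im}\,G(E+i0;0,0)\ne 0$ \emph{with positive probability} does not by itself give that the Lebesgue measure restricted to $\{L<\log K\}$ is a.s.\ absolutely continuous with respect to the a.c.\ part of the spectral measure: one needs $\mathrm{Im}\,G(E+i0;0,0)\ne 0$ \emph{almost surely} for a.e.\ such $E$. This amplification from positive probability to probability one is the content of the matrix analogue of \cite[Lemma~4.4]{AW} (Lemma~\ref{l:4.4*} in this paper), a bootstrap/zero--one--law type statement that your outline invokes implicitly (``almost surely has bounded positive imaginary part with positive probability'' conflates the two) but does not actually contain. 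In the scalar case this bootstrap is relatively automatic from the one-step recursion; in the matrix case it requires control via the \emph{fastest} Lyapunov exponent, which is a nontrivial additional ingredient. Your suggestion to project onto the slowest Oseledets direction is close in spirit to the paper's choice of $v$ and $w$ above, but on its own does not substitute for the bootstrap. Finally, the reduction of matrix statements to scalar ones (e.g.\ that under no-a.c.\ the matrix $G(0,0;E+i0)$ is real symmetric, and that the operator norm of a real symmetric matrix is controlled by the diagonal quadratic forms at $e_j$ and $e_j+e_k$) is needed to make the second-moment estimate in the matrix case go through and is not addressed.
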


Similarly to the results of \cite{AW}, Theorem~\ref{thm2} is sharp in the following sense: the spectrum of $H_{\lambda,\omega}$
in $\left\{ E \, \mid \, L_\lambda(E) > \log K \right\}$ is almost surely pure point, as follows from the results of \cite{A}.

\vspace{2mm}\noindent
For expositional reasons, we first prove
\begin{thm}\label{thm2'}
$H$ has (almost surely) no pure point spectrum in the set
\[ \left\{ E \, \mid \, L(E) < \log K \right\}~. \]
\end{thm}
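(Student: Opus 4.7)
The plan is to reduce the absence of pure point spectrum in $I \subset \{L < \log K\}$ to a positivity statement for the imaginary part of the matrix-valued Green's function at the root, and then to establish this positivity by adapting the resonant-delocalization argument of \cite{AW} to the matrix-valued setting.

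Let $g_c(z)$ denote the $W \times W$ resolvent at the vertex $c$ of the restriction of $H$ to the subtree rooted at $c$; these satisfy the Schur-complement recursion
\[ g_c(z)^{-1} = U_\omega(c) - z - \sum_{c' \in \mathrm{ch}(c)} g_{c'}(z). \]
Set $\Gamma_o(z) := \sum_{c \in \mathrm{ch}(o)} g_c(z)$, so that $G(o, o; z) = (U_\omega(o) - z - \Gamma_o(z))^{-1}$. Crucially, $\Gamma_o$ is independent of $V_\omega(o)$, which is drawn from GOE---an absolutely continuous distribution on real symmetric $W \times W$ matrices. The matrix-valued analog of the Simon--Wolff / Aronszajn criterion, applied to the rank-$W$ perturbation $V_\omega(o)$, reduces the absence of pure point spectrum in $I$ with eigenfunctions nonvanishing at $o$---for a.e.\ realization of the remaining potentials---to the statement that $\mathrm{Im}\,\Gamma_o(E + i0) \succ 0$ for Lebesgue-a.e.\ $E \in I$. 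Handling all vertices by a countable union, this reduces the theorem to
\[
\mathbb{P}\bigl(\mathrm{Im}\,\Gamma_o(E + i0) \succ 0\bigr) > 0 \qquad \text{for Lebesgue-a.e.\ } E \in \{L(E) < \log K\},
\]
which by a tail-triviality argument upgrades to probability $1$.

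Iterating the recursion and taking imaginary parts yields, for every $n \ge 1$ and $\eta > 0$, the pointwise lower bound
\[
\mathrm{Im}\,\Gamma_o(E + i\eta) \;\succeq\; \sum_{x:\, d(x, o) = n} T_{o, x}\,(\mathrm{Im}\, g_x)\,T_{o, x}^{*},
\]
where $T_{o, x} = g_{v_1}\,g_{v_2}\cdots g_{v_{n-1}}$ is the product of resolvent matrices along the path $o = v_0, v_1, \ldots, v_n = x$. By definition of the slowest Lyapunov exponent $L = L_\lambda(E)$, the norm $\|T_{o, x}\|$ decays at rate $e^{-L n}$ along typical paths, while the sum contains $K^n$ terms. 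A naive second-moment estimate therefore produces contributions of order $K^n e^{-2L n}$, which is meaningful only in the $\ell^2$ regime $L < \tfrac{1}{2} \log K$, far short of the $\ell^1$ threshold $\log K$. The gap is closed by resonant delocalization: the matrices $\mathrm{Im}\,g_x$ are heavy-tailed (due to the near-degeneracies of the local Schur-complement denominator), and the large contributions from rare resonant vertices, when weighted through an appropriate fractional-moment bound that yields the decay rate $e^{-L n}$ rather than $e^{-2L n}$, produce a nonvanishing lower bound on $\mathrm{Im}\,\Gamma_o$ precisely when $L(E) < \log K$.

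The main obstacle is implementing this estimate for $W \times W$ matrix-valued potentials. Two issues arise beyond the scalar case of \cite{AW}. First, one needs fractional-moment bounds on the products $T_{o, x}$ governed by the \emph{slowest} Lyapunov exponent $L(E)$; this requires some care because a product of $W \times W$ matrices has $W$ positive Lyapunov exponents and only the slowest is relevant for delocalization. Second, Wegner-type a priori bounds on the local resolvent $g_x$ must exploit the smoothness and the orthogonal invariance of the GOE distribution to ensure that resonances occur in every direction of $\mathbb{C}^W$, not merely along a preferred subspace. With these matrix analogs of the scalar ingredients of \cite{AW} in hand, a Paley--Zygmund second-moment argument on the iterated recursion yields the required probability lower bound and hence the theorem.
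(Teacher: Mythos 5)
Your overall strategy---reduce to a boundary-value statement for the Green function via a Simon--Wolff-type criterion, lower-bound the relevant quantity by the iterated Schur recursion, and close with a second-moment (Paley--Zygmund) argument exploiting the slowest Lyapunov exponent---is the right circle of ideas, and your pointwise bound $\mathrm{Im}\,\Gamma_o \succeq \sum_{x} T_{o,x}\,(\mathrm{Im}\,g_x)\,T_{o,x}^*$ does appear in the paper (Lemma~\ref{l:**}). However, your reduction is to the wrong target, and the probability upgrade you wave at is a genuine gap.

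First, the matrix Simon--Wolff criterion (Proposition~\ref{prop1:sw}) does \emph{not} reduce absence of pure point spectrum to $\mathrm{Im}\,\Gamma_o(E+i0) \succ 0$ a.e.\ $E$. It reduces it to the divergence $\sum_{x \in \mathcal{T}} \|G(0,x;E+i0)\|^2 = \infty$ a.s.\ for a.e.\ $E$. The condition $\mathrm{Im}\,\Gamma_o \succ 0$ is strictly stronger; it is the statement needed for the stronger Theorem~\ref{thm2} (a.c.\ spectrum fills the set), not for Theorem~\ref{thm2'}. Targeting the $\ell^2$-divergence directly, as the paper does, is both sufficient and simpler: one shows, via a first/second moment bound on the number $N$ of ``resonant'' vertices at level $n$, that with probability bounded away from zero some $\|G(0,x;E+i\eta)\|$ is $\geq e^{\delta n}$. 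That uses a scalarized lower bound $\|G(0,x;z)\| \geq \|v\|\,|\langle G(x,x;z)\tilde v, \tilde v\rangle|$ (equation~(\ref{eq:bdin2'})), not $\mathrm{Im}\,\Gamma$. Moreover, the paper's second-moment bound (Proposition~\ref{prop:4secondmoment}) critically invokes the \emph{no-a.c.\ hypothesis}, under which $G(0,0;E+i0)$ is real symmetric (Claim~\ref{cl:6}), enabling the matrix-norm decoupling of Claim~\ref{cl:7}; your plan makes no use of this dichotomy and it is unclear how you would control the second moment without it.

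Second, the claim that ``tail triviality'' upgrades a positive-probability bound on $\mathrm{Im}\,\Gamma_o$ to probability $1$ is not justified. The event $\{\mathrm{Im}\,\Gamma_o(E+i0)\succ 0\}$ depends on every $V_\omega(x)$ and is not a Kolmogorov tail event, so no off-the-shelf $0$-$1$ law applies. What the paper actually uses is the bootstrapping Lemma~\ref{l:4.4*} (the analogue of \cite[Lemma~4.4]{AW}), whose proof---as the paper explicitly remarks---requires working with the \emph{fastest} Lyapunov exponent of the matrix cocycle. Your plan asserts, to the contrary, that ``only the slowest is relevant for delocalization,'' which overlooks precisely this subtlety; in the matrix setting the gap between fastest and slowest Lyapunov exponents is a nontrivial new obstruction that the paper spends considerable effort to control.
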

\noindent and then the stronger Theorem~\ref{thm2}.

\vspace{2mm}\noindent
Finally, let us comment on the generality of the results. The simplest generalization of the Bethe 
strip setting of \cite{AW} is the GOE potential, corresponding to $A = 0$ (and small $\lambda > 0$). 
In this case, only minor modifications (due to the non-commutativity of matrix product) would be required
in the arguments of \cite{AW}, since the Lyapunov exponents differ from one another by a quantity which
vanishes in the limit $\lambda \to 0$ (at least, in the sense of Theorem~\ref{thm1}).

When $A \neq 0$, additional difficulties arise, which are due to the fact that there may be a significant
difference between the fastest and the slowest Lyapunov exponent. Most of the current paper is devoted to
overcoming these difficulties. We state the results for the case when $V_\omega(x)$ are drawn from the Gaussian
Orthogonal Ensemble, but the arguments may be extended to more general potentials with off-diagonal disorder.
The crucial requirement is the {\em conditional a.c.\ property}, stating that the conditional distribution
of $V_\omega(x)_{i_0, j_0}$ given $\{ V_\omega(x)_{ij} \, \mid \, (i, j) \neq (i_0, j_0), (j_0, i_0)\}$ is absolutely continuous.
We try to indicate where the off-diagonal disorder assumption is used in the proof.

It would be interesting to extend the results of this paper to the case of diagonal disorder: for example, $V_\omega(x)$ is
a diagonal matrix with independent identically distributed entries (which would correspond to the usual Bethe strip).

\section{Preliminaries and proof of Theorem~\ref{thm1}}

For 
\[ z \in \mathbb{C}^+ = \left\{ z \in \mathbb{C} \, \mid \, \Im z > 0\right\}~, \]
the Green function $G_\lambda(x, y; z)$ is the $xy$ block of the
resolvent $(H_\lambda - z)^{-1}$ (from this point we suppress the dependence on $\omega$). 
For a vertex $u$ of $\mathcal{T}$, $G_\lambda^{\mathcal{T}_u}(x, y, z)$ is the $xy$ block
of the Green function associated with the restriction of $H_\lambda$ to the subgraph $\mathcal{T}_u$ obtained by
removing $u$ from $\mathcal{T}$. $\mathcal{N}_u^+$ is the collection of forward neighbors of a vertex $u$, and 
$\mathcal{N}_u$ is the collection of all neighbors of $u$. The root of $\mathcal{T}$ is denoted $0$.

\begin{cl}[Proposition~3.1*]\label{cl:1ident} For any matrix-valued Schr\"odinger operator $H$ on $\mathcal{T}$ 
with potential $U$, and any $z \in \mathbb{C}^+$,
\[ G_\lambda(x,x; z) = \left(U(x) - z - \sum_{y \in \mathcal{N}_x} G^{\mathcal{T}_x}(y,y; z)\right)^{-1}~,\]
and for any ordered pair $0 \prec x \prec y$ 
\[\begin{split}
G_\lambda(x,y; z) 
&= G_\lambda(x,x; z) G_\lambda^{\mathcal{T}_x}(x_1, y; z) 
= G_\lambda^{\mathcal{T}_y}(x, x_n; z) G_\lambda(y,y; z)\\
&= G_\lambda(x,x; z) G_\lambda^{\mathcal{T}_{x}}(x_1, x_1; z) \cdots G_\lambda^{\mathcal{T}_{x_n}}(y,y; z)~,
\end{split}\]
where $x x_1 x_2 \cdots x_n y$ is the path from $x$ to $y$.
\end{cl}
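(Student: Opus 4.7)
My plan is to derive both identities via the block Schur complement formula and the second resolvent identity, combined with the fundamental tree property that removing a vertex $x$ from $\mathcal{T}$ separates the remaining graph into the disjoint union of subtrees, one rooted at each $y \in \mathcal{N}_x$. Apart from careful bookkeeping of the order of matrix products (the only real novelty beyond the $W=1$ case treated in \cite{AW}), the argument is formal.

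For the first identity, decompose the Hilbert space as $\ell^2(\{x\} \to \mathbb{R}^W) \oplus \ell^2(\mathcal{T}_x \to \mathbb{R}^W)$. With respect to this decomposition, $H - z$ has diagonal blocks $U(x) - z$ and $H^{\mathcal{T}_x} - z$, and its off-diagonal block encodes the hopping $\sum_{y \in \mathcal{N}_x} |x\rangle\langle y| \otimes \mathbbm{1}_{W \times W}$. For $\Im z > 0$, self-adjointness of $H^{\mathcal{T}_x}$ makes $H^{\mathcal{T}_x} - z$ invertible, so the block Schur complement formula yields
\[ G(x,x;z)^{-1} = U(x) - z - \sum_{y, y' \in \mathcal{N}_x} G^{\mathcal{T}_x}(y, y'; z). \]
The disjoint-subtree property then forces $G^{\mathcal{T}_x}(y, y'; z) = 0$ for $y \neq y' \in \mathcal{N}_x$, and only the diagonal terms survive, giving the stated identity.

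For the off-diagonal identities, let $H_0$ denote the operator obtained from $H$ by deleting the edges incident to $x$, so that $\{x\}$ is disconnected from $\mathcal{T}_x$ in $H_0$ and $G_0 = (U(x)-z)^{-1} \oplus G^{\mathcal{T}_x}$. Apply the second resolvent identity $G = G_0 - G(H - H_0) G_0$ at the $(x, y)$-block with $y \succ x$: since $G_0(x, \cdot; z)$ vanishes on $\mathcal{T}_x$, only one family of terms survives, and among the neighbors $v \in \mathcal{N}_x$ only $v = x_1$ contributes nontrivially (by the disjoint-subtree argument applied to $G^{\mathcal{T}_x}(v, y; z)$, which vanishes unless $v$ lies in the same component as $y$). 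This yields $G(x, y; z) = G(x, x; z)\, G^{\mathcal{T}_x}(x_1, y; z)$; detaching $y$ instead of $x$ gives the mirror identity. Iterating the same recursion on the nested subtrees along the path $x\, x_1 \cdots x_n\, y$ telescopes into the full product.

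The only real obstacle is the matrix bookkeeping: factor order must be respected throughout, and invertibility of every Schur complement encountered must be justified. The latter is automatic from the matrix Herglotz property of block Green functions of Hermitian operators, namely that $\Im G^{\mathcal{T}_x}(y,y;z)$ is positive-definite for $\Im z > 0$, which makes each Schur complement matrix invertible. With this, the proof is essentially formal.
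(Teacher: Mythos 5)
Your proof is correct and follows essentially the same route as the paper: the block Schur--Banachiewicz (Schur complement) formula for the diagonal identity, and the resolvent identity iterated along the path for the off-diagonal factorizations. The extra details you supply --- the vanishing of the cross terms $G^{\mathcal{T}_x}(y,y';z)$ for $y\neq y'$ by the disjoint-subtree structure, the explicit choice of the comparison operator $H_0$ in the resolvent identity, and the justification of invertibility via self-adjointness and the matrix Herglotz property --- merely make explicit what the paper leaves implicit.
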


\begin{proof}
To prove the first statement, decompose
\[ \ell_2(\mathcal{T} \to \mathbb{R}^W) = \ell^2 (\{x\} \to \mathbb{R}^W) \oplus \ell_2(\mathcal{T}_x \to \mathbb{R}^W)~, \]
and apply the Schur--Banachiewicz formula for block matrix inversion. To prove the second statement,
we iterate the formula
\[ G_\lambda(x, y; z) = G_\lambda(x, x; z) G_\lambda^{\mathcal{T}_x}(x_1, y; z)\]
which follows from the resolvent identity.
\end{proof}

Let $0 x_1 x_2 x_3 \cdots x_n \cdots$ be a branch of $\mathcal{T}$. Denote
\[ L(z) = - \lim_{n \to \infty} \frac{1}{n+1} \ln \| G_\lambda(0, x_n; z) \|~, \]
where $\| \cdot \|$ stands for the operator norm. This is the slowest Lyapunov exponent.

\begin{cl}\label{cl:2fk}
The Lyapunov exponent $L(z)$ is defined and non-random for any independent identically distributed  
matrix potential $U(x)$ which satisfies 
\[ \mathbb{E} \log^+ \| U(x) \| < \infty~.\]
\end{cl}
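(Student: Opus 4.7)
The plan is to iterate Claim~\ref{cl:1ident} to write $G_\lambda(0, x_n; z)$ as a random matrix product, and then invoke Kingman's subadditive ergodic theorem. Setting $x_0 = 0$ and $\Gamma_k = G_\lambda^{\mathcal{T}_{x_{k-1}}}(x_k, x_k; z)$, the second formula of Claim~\ref{cl:1ident} gives
\[ G_\lambda(0, x_n; z) \;=\; G_\lambda(0, 0; z) \, \Gamma_1 \Gamma_2 \cdots \Gamma_n~. \]
Removing $x_{k-1}$ disconnects $\mathcal{T}$, so $\Gamma_k$ depends only on $U$ restricted to the forward subtree $T_k^+$ rooted at $x_k$. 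As rooted graphs the $T_k^+$ are all isomorphic, and they carry i.i.d.\ matrix potentials of the same law, so the sequence $(\Gamma_k)_{k \geq 1}$ is stationary under the shift $k \mapsto k+1$. Since $\bigcap_{k \geq 1} T_k^+ = \emptyset$ and the $U(y)$ are independent across $y$, the intersection $\bigcap_k \sigma(U|_{T_k^+})$ is trivial, hence the tail $\sigma$-algebra of $(\Gamma_k)$ is trivial and the shift is ergodic.

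Submultiplicativity of the operator norm,
\[ \log \|\Gamma_{k+1} \cdots \Gamma_{k+m+n}\| \;\leq\; \log \|\Gamma_{k+1} \cdots \Gamma_{k+m}\| + \log \|\Gamma_{k+m+1} \cdots \Gamma_{k+m+n}\|, \]
is the subadditivity required by Kingman. The matrix-Herglotz bound $\|\Gamma_k\| \leq 1/\Im z$, which holds because each $\Gamma_k$ is a diagonal block of the resolvent of a self-adjoint operator on $T_k^+$, gives in particular $\mathbb{E}(\log \|\Gamma_1\|)^+ < \infty$. Kingman's theorem then produces an a.s., non-random limit
\[ L_\ast \;=\; \lim_{n \to \infty} \frac{1}{n} \log \|\Gamma_1 \Gamma_2 \cdots \Gamma_n\| \;\in\; [-\infty, \infty)~. \]

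To transfer back to $G_\lambda(0, x_n; z)$, note that $\|G_\lambda(0,0;z)\| \leq 1/\Im z$ by the same Herglotz bound, while the first identity of Claim~\ref{cl:1ident} yields $\|G_\lambda(0,0;z)^{-1}\| \leq \|U(0) - z\| + K/\Im z$. The hypothesis $\mathbb{E} \log^+ \|U(x)\| < \infty$ ensures $\log \|G_\lambda(0,0;z)^{-1}\|$ is a.s.\ finite, so
\[ \tfrac{1}{n+1}\bigl|\log \|G_\lambda(0, x_n; z)\| - \log \|\Gamma_1 \cdots \Gamma_n\|\bigr| \;\longrightarrow\; 0 \quad \text{a.s.}, \]
and therefore $L(z) = -L_\ast$ exists a.s.\ and is non-random. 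The step I expect to require the most care is the verification of stationarity and ergodicity of $(\Gamma_k)$---both rest on the self-similarity of the rooted tree and the i.i.d.\ nature of $U$---but the non-commutativity of the matrix product is not a real obstacle, as Kingman's theorem only needs submultiplicativity of $\|\cdot\|$, which holds verbatim.
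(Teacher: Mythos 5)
Your proof is correct and takes essentially the same route as the paper, which simply cites the Furstenberg--Kesten theorem: you supply the details that are implicit in that citation, namely the product decomposition $G_\lambda(0,x_n;z)=G_\lambda(0,0;z)\,\Gamma_1\cdots\Gamma_n$ from Claim~\ref{cl:1ident}, stationarity and ergodicity of $(\Gamma_k)$ via the self-similarity of the rooted tree and triviality of the tail $\sigma$-algebra, and a subadditive ergodic theorem (Kingman's result subsumes Furstenberg--Kesten). One small remark: the a.s.\ finiteness of $\log\|G_\lambda(0,0;z)^{-1}\|$ already follows from $\|U(0)\|<\infty$ a.s., so the moment hypothesis $\mathbb{E}\log^+\|U(x)\|<\infty$ is not really what is needed at that particular step; the Herglotz bound $\|\Gamma_k\|\le 1/\Im z$ is doing all the integrability work.
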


The claim follows from the Furstenberg--Kesten theorem \cite{FK}. For $U = A  +\lambda V$, we denote
the Lyapunov exponent by $L_\lambda$ when we need to emphasize the dependence on $\lambda$. For 
$E \in \mathbb{R}$, we set 
\[ L_\lambda(E) = \lim_{\eta \to +0} L_\lambda(E+i\eta)~. \]

\begin{cl}\label{cl:3conv}
For any matrix potential $U = A + \lambda V$, where $A$ is fixed and $V(x)$ are independent and identically
distributed with $\mathbb{E} \log^+ \|V(x)\|<\infty$, and for any $z \in \mathbb{C}^+$,
\[ L_\lambda(z) \to L_0(z) \quad \text{as} \quad {\lambda \to 0}~. \]
\end{cl}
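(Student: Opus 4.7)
My plan is to combine the product decomposition from Claim~\ref{cl:1ident} with the uniform bound $\|G_\lambda^{\mathcal{T}_x}(y, y; z)\| \leq (\Im z)^{-1}$ (valid for any block of a Herglotz resolvent) and the fact that at $\lambda = 0$ the tree Green-function recursion is deterministic with a unique Herglotz fixed point.

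First, by Claim~\ref{cl:1ident}, along any branch $0 = y_0 \prec y_1 \prec \cdots$ of $\mathcal{T}$,
\[
G_\lambda(0, y_n; z) = M_0^{(\lambda)} \cdots M_n^{(\lambda)}, \qquad M_k^{(\lambda)} := G_\lambda^{\mathcal{T}_{y_{k-1}}}(y_k, y_k; z),
\]
with $\|M_k^{(\lambda)}\| \leq (\Im z)^{-1}$ uniformly and $\Im M_k^{(\lambda)} > 0$. By tree symmetry the matrices $M_k^{(\lambda)}$ for $k \geq 1$ are identically distributed, with common law $\mu_\lambda$ satisfying, by the Schur identity (first part of Claim~\ref{cl:1ident}) and independence of potentials on disjoint subtrees, the distributional fixed-point equation
\[
M \stackrel{d}{=} \bigl(A + \lambda V - z - \textstyle\sum_{i=1}^K M^{(i)}\bigr)^{-1},
\]
with the $M^{(i)}$ i.i.d.\ $\sim \mu_\lambda$ and $V \sim V_\omega(x)$ independent of them. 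At $\lambda = 0$ this reduces to $m = (A - z - Km)^{-1}$, which has a unique Herglotz solution $m(z)$; hence $G_0(0, y_n; z) = m(z)^{n+1}$ and $L_0(z) = -\log \rho(m(z))$.

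Next, I would show pointwise convergence $M_k^{(\lambda)}(\omega) \to m(z)$ a.s.\ as $\lambda \to 0$ via strong resolvent convergence $H_\lambda \to H_0$ (which holds because $\lambda V_\omega \to 0$ on every compactly supported vector for each fixed $\omega$). Combining with the upper bound $\|M_k^{(\lambda)}\| \leq (\Im z)^{-1}$ and the $\log$-integrable lower bound $\|M_k^{(\lambda)}\| \geq (\|A\| + \lambda\|V(y_k)\| + |z| + K(\Im z)^{-1})^{-1}$ (using $\mathbb{E} \log^+\|V\| < \infty$), dominated convergence then yields, for every fixed $n$,
\[
\tilde a_n(\lambda) := \mathbb{E} \log \|M_1^{(\lambda)} \cdots M_n^{(\lambda)}\| \xrightarrow{\lambda \to 0} \log \|m(z)^n\|.
\]
Furstenberg--Kesten combined with subadditivity of $\tilde a_n$ (which follows from $\|AB\| \leq \|A\|\|B\|$ and stationarity of $(M_k^{(\lambda)})_{k \geq 1}$) gives the sup representation $L_\lambda(z) = \sup_n(-\tilde a_n(\lambda)/n)$. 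Exchanging sup and liminf delivers $\liminf_{\lambda \to 0} L_\lambda(z) \geq \sup_n(-\log \|m(z)^n\|/n) = -\log \rho(m(z)) = L_0(z)$.

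The main obstacle is the reverse inequality $\limsup_{\lambda \to 0} L_\lambda(z) \leq L_0(z)$, which the sup representation does not give for free. My proposed approach is to establish a quantitative contraction of the tree recursion on probability laws on the Siegel upper half-space $\{M : \Im M > 0\}$, in an appropriate hyperbolic metric---the matrix analogue of the Schwarz--Pick contraction of M\"obius maps on $\mathbb{C}^+$ used in the scalar case of \cite{AW}. A contraction rate depending only on $\Im z$ and uniform in $\lambda$ would make the convergence $-\tilde a_n(\lambda)/n \to L_\lambda(z)$ uniform for small $\lambda$; combined with the per-$n$ continuity above this would close the gap. Making this matrix-valued contraction quantitative is the technical core of the proof.
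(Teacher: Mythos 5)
Your argument is only half a proof, and the half that is missing is the substantive one. The lower-semicontinuity direction is fine: the factorization via Claim~\ref{cl:1ident}, the deterministic bound $\|M_k^{(\lambda)}\|\le(\Im z)^{-1}$, the log-integrable lower bound on $\|M_1^{(\lambda)}\cdots M_n^{(\lambda)}\|\ge\prod_k\|(M_k^{(\lambda)})^{-1}\|^{-1}$, dominated convergence for each fixed $n$, and the Furstenberg--Kesten/subadditivity representation $L_\lambda(z)=\sup_n(-\tilde a_n(\lambda)/n)$ together give $\liminf_{\lambda\to0}L_\lambda(z)\ge L_0(z)$ correctly. But the claim is an equality, and for the reverse inequality you offer only a plan (``establish a quantitative contraction \dots the technical core of the proof''), not an argument. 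That is a genuine gap, and not a routine one: a Schwarz--Pick-type contraction of the recursion acting on laws would control the single-site marginal $\mu_\lambda$ and could show $\mu_\lambda\to\delta_{m(z)}$ weakly, but the slowest Lyapunov exponent of a product of \emph{dependent} (merely stationary, nested along the ray) matrices is not a continuous functional of the one-dimensional marginal, and weak convergence of $\mu_\lambda$ does not by itself upgrade the per-$n$ convergence of $-\tilde a_n(\lambda)/n$ to the uniformity in $n$ that you would need to interchange the limits. You have correctly located where the difficulty sits, but you have not resolved it, so the statement is not proved.

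For comparison, the paper disposes of this claim in one line, by invoking strong resolvent convergence of $H_\lambda$ to $H_0$ for the fixed $z\in\mathbb{C}^+$ and asserting that the argument is identical to that of \cite[Section~6.1]{AW}; no contraction machinery appears. In the scalar setting of \cite{AW} the key simplification is that $L_\lambda(z)=-\mathbb{E}\log|\Gamma_\lambda(0;z)|$ is the expectation of a function of a \emph{single} Green-function entry, so continuity in $\lambda$ reduces to convergence in distribution of $\Gamma_\lambda$ plus uniform integrability --- precisely the step that has no immediate matrix analogue for the slowest exponent, which is why your route stalls. If you want a self-contained matrix proof you must either find a substitute for that single-factor identity or actually carry out the uniform-in-$\lambda$, uniform-in-$n$ estimate you sketch; as written, the proposal does not establish Claim~\ref{cl:3conv}.
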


Claim~\ref{cl:3conv} follows from the strong resolvent convergence outside the spectrum. From Claim~\ref{cl:3conv} 
and the Fatou lemma, we obtain
\begin{cl}\label{cl:6.1}[Theorem~6.1*]
For any matrix potential $U = A + \lambda V$, where $A$ is fixed and $V(x)$ are independent and identically
distributed, and for any bounded interval $I \subset \mathbb{R}$, the function
\[ \lambda \mapsto \int_I L_\lambda(E) dE \] 
is continuous, and, in particular,
\[ \lim_{\lambda \to 0} \int_I L_\lambda(E) dE = \int_I L_0(E) dE~. \]
\end{cl}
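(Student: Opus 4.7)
The plan is to lift the pointwise convergence $L_\lambda(z) \to L_0(z)$ on $\mathbb{C}^+$ provided by Claim~\ref{cl:3conv} to convergence of the boundary integrals $\int_I L_\lambda(E)\,dE$. For each fixed $\eta > 0$, the bound $\|G_\lambda(z)\| \le 1/\eta$ keeps $\{L_\lambda(E+i\eta) : \lambda \in [0,\lambda_0],\, E \in I\}$ uniformly bounded on the bounded interval $I$, so Claim~\ref{cl:3conv} combined with dominated convergence yields $\int_I L_\lambda(E+i\eta)\,dE \to \int_I L_0(E+i\eta)\,dE$ as $\lambda \to 0$. The central task is then to pass to the limit $\eta \to 0^+$.

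The structural input is that $L_\lambda(z)$ has the form of a logarithmic potential of a positive measure $\mu_\lambda$ on $\mathbb{R}$ (the density of states of $H_\lambda$), namely $L_\lambda(z) = c + \int \log|z-E'|\,d\mu_\lambda(E')$. This makes $\eta \mapsto L_\lambda(E+i\eta)$ monotone non-increasing as $\eta \searrow 0^+$ with limit $L_\lambda(E)$, so monotone convergence gives $\int_I L_\lambda(E+i\eta)\,dE \searrow \int_I L_\lambda(E)\,dE$ for each $\lambda$. Combined with the off-axis convergence, this yields, for every $\eta > 0$,
\[
\int_I L_\lambda\,dE \;\le\; \int_I L_\lambda(E+i\eta)\,dE \;\xrightarrow[\lambda \to 0]{}\; \int_I L_0(E+i\eta)\,dE,
\]
so $\limsup_{\lambda \to 0}\int_I L_\lambda\,dE \le \int_I L_0(E+i\eta)\,dE$; letting $\eta \to 0^+$ and using the same monotone passage at $\lambda = 0$ yields $\limsup_{\lambda \to 0}\int_I L_\lambda\,dE \le \int_I L_0\,dE$.

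The matching lower bound is the main obstacle. A direct application of Fatou's lemma to $L_\lambda \ge 0$ reduces it to $\liminf_{\lambda \to 0} L_\lambda(E) \ge L_0(E)$ for a.e.\ $E$, which is not automatic since Lyapunov exponents are generally only upper semicontinuous in the parameters. The cleanest resolution is to upgrade Claim~\ref{cl:3conv}, via Stieltjes inversion of the logarithmic potential, to weak convergence $\mu_\lambda \to \mu_0$ (uniformly tight near $\lambda = 0$), and then use the identity $\int_I L_\lambda(E)\,dE = c|I| + \int \phi_I(E')\,d\mu_\lambda(E')$ with $\phi_I(E') := \int_I \log|E-E'|\,dE$ continuous and bounded on bounded sets. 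Weak convergence of $\mu_\lambda$ then delivers both bounds simultaneously, establishing the asserted continuity of $\lambda \mapsto \int_I L_\lambda\,dE$.
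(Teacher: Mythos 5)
Your overall plan---prove off-axis convergence $\int_I L_\lambda(E+i\eta)\,dE \to \int_I L_0(E+i\eta)\,dE$ and then pass $\eta\to 0^+$ by exploiting a harmonic/potential-theoretic representation of $L_\lambda$---is in the same spirit as the paper's, which defers to \cite[Section~6.1]{AW} and invokes Fatou. But there are genuine gaps in your execution.

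The argument pivots on the asserted ``structural input'' $L_\lambda(z) = c + \int \log|z-E'|\,d\mu_\lambda(E')$ with $\mu_\lambda \geq 0$, and this is neither proved nor, as stated, correct in the setting of this paper. First, $\mu_\lambda$ is certainly not the density of states of $H_\lambda$: even for $W=1$ the measure appearing in the Herglotz representation of $-\mathbb{E}\log\Gamma_\lambda$ comes from the phase of a forward Green function on a subtree, not from the IDS of the full Bethe operator. Second, and more seriously, for $W>1$ the function $L$ is the slowest Lyapunov exponent $L(z) = -\lim \frac{1}{n+1}\log\|G(0,x_n;z)\|$; since $\log\|M(z)\|$ is subharmonic for analytic $M$, $L$ is a priori superharmonic, whereas a logarithmic potential $c + \int\log|z-E'|d\mu_\lambda$ with $\mu_\lambda\geq 0$ is subharmonic. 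So your representation already implies $L$ is harmonic, which for a matrix cocycle requires an argument (e.g.\ simplicity of the top Lyapunov exponent of the $\Gamma$-product, a Craig--Simon/Kotani-type theorem) that you have not supplied. In the scalar case $W=1$ harmonicity is automatic because $L = -\mathbb{E}\log|\Gamma|$ and $\Gamma$ is a zero-free Herglotz function; that is what makes \cite[Section~6.1]{AW} work, and it is exactly the point that needs to be re-examined when $W>1$. You also take $c$ to be $\lambda$-independent; this too needs justification, since both your upper and lower bounds use it implicitly.

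Even granting the representation, your lower bound is only sketched and has a real integrability issue. You test the weak convergence $\mu_\lambda\to\mu_0$ against $\phi_I(E') = \int_I\log|E-E'|\,dE$, but $\phi_I$ is \emph{unbounded} ($\phi_I(E')\sim |I|\log|E'|$ as $|E'|\to\infty$); weak convergence plus tightness controls bounded continuous test functions, not $\phi_I$. You would need uniform integrability of $\log|E'|$ against $\{\mu_\lambda\}_{\lambda\text{ small}}$, which again is not established. Finally, the weak convergence $\mu_\lambda\to\mu_0$ itself is left as an assertion (``upgrade Claim~\ref{cl:3conv} via Stieltjes inversion''); to extract it from pointwise convergence of $L_\lambda$ on $\mathbb{C}^+$ you would want to first promote pointwise to locally uniform convergence (Harnack, using positivity and harmonicity), and only then invoke an inversion argument. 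None of these steps is automatic, and they are the core of the proof rather than routine details.
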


The argument justifying Claims~\ref{cl:3conv} amd \ref{cl:6.1} is identical to that of \cite[Section~6.1]{AW}.
Theorem~\ref{thm1} is a consequence of Claim~\ref{cl:6.1} and the explicit computation of the free Lyapunov
exponent $L_0$, which can be performed using Claim~\ref{cl:1ident} and  which shows that
\[ L_0(E) < \log K \iff E \in S_0 \equiv \bigcup_i (\nu_i - (K+1), \nu_i + (K+1))~. \]

\section{Proof of Theorem~\ref{thm2'}}

The proof of Theorem~\ref{thm2'} makes use of the following version of the Simon--Wolff criterion \cite{SW}:
\begin{prop}\label{prop1:sw}[Matrix Simon--Wolff criterion]
Suppose an i.i.d.\ matrix potential $U(x)$ satisfies the following two properties:
\begin{enumerate}
 \item $U(x)$ has independent entries on the diagonal,
 \item $U(x)$ is irreducible, meaning that it has no non-trivial deterministic invariant
subspace.
\end{enumerate}
Then the pure point part of the spectral measure is almost surely supported on the set
\[ \Sigma = \left\{ E \in \sigma(H) \, \mid \, \sum_{x \in \mathcal{T}} \|G(0, x; E+i0) \|^2 < \infty \quad \text{almost surely}\right\}~,\]
and the continuous part is almost surely supported on its complement.
\end{prop}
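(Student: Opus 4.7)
This proposition is the matrix analogue of the classical Simon--Wolff criterion \cite{SW}, and the scheme is to mirror the scalar proof, adapting two points to the matrix-valued setting: a \emph{cyclicity} step at the root fiber (using irreducibility) and a \emph{rank-$W$ spectral averaging} step at the root (using the independence of the diagonal entries).

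\emph{Cyclicity.} I first show that the closed $H$-invariant subspace $\mathcal{C}$ generated by $\{e_i(0)\}_{i=1}^W$ is a.s.\ all of $\ell^2(\mathcal{T}\times G)$. Any $\psi\in\mathcal{C}^\perp$ satisfies $\psi(0)=0$ and $(H^n\psi)(0)=0$ for every $n\ge 0$; these are linear constraints on $\{\psi(x)\}_{d(0,x)\le n}$ whose coefficients are polynomial in the random matrices $\{U(x)\}_{d(0,x)\le n}$. Combining assumption (1) (absolute continuity and independence of the diagonal entries of $U$) with assumption (2) (no deterministic invariant subspace) rules out nonzero solutions a.s. Hence the scalar spectral type of $H$ coincides with that of the $W\times W$ matrix-valued spectral measure $\mathcal{M}_0$ at the root.

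\emph{Rank-$W$ Simon--Wolff at the root.} Condition on every random parameter other than the $\alpha_i:=V(0)_{ii}$, which are independent and absolutely continuous. Then $H=H_0+\lambda\sum_i\alpha_i|e_i(0)\rangle\langle e_i(0)|$ is a rank-$W$ perturbation of an $\omega$-dependent self-adjoint operator $H_0$, and the matrix Weyl--Krein formula
\[ G(0,0;z)=\bigl(F_0(z)^{-1}+\lambda\,\mathrm{diag}(\alpha_1,\dots,\alpha_W)\bigr)^{-1},\qquad F_0(z):=G_{H_0}(0,0;z), \]
shows that eigenvalues of $H$ visible at the root solve $\det(I+\lambda F_0(E+i0)\,\mathrm{diag}(\alpha))=0$. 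A Fubini-type argument -- applying scalar Simon--Wolff in each $\alpha_i$ with the others fixed and exploiting the joint independence together with the determinantal form of the eigenvalue equation -- yields that a.s.\ in $(\alpha_1,\dots,\alpha_W)$ the pure point part of $\mathcal{M}_0$ is supported on the random set
\[ \tilde\Sigma(\omega)=\Bigl\{E:\sum_{x\in\mathcal{T}}\|G(0,x;E+i0)\|^2<\infty\Bigr\}, \]
and the continuous part on its complement. The spectral-theorem identity $\mathrm{Tr}(\mathrm{Im}\,G(0,0;E+i\eta)/\eta)=\sum_x\|G(0,x;E+i\eta)\|_{HS}^2$ and the equivalence of $\|\cdot\|_{HS}$ with $\|\cdot\|$ in fixed finite dimension identify $\tilde\Sigma$ with the set displayed in the proposition. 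A standard Kolmogorov zero-one argument, exploiting the tail-measurability of $\{E\in\tilde\Sigma\}$ with respect to the i.i.d.\ potentials $\{U(x)\}$, then shows that $\tilde\Sigma(\omega)$ agrees a.s.\ with the deterministic $\Sigma$, completing the proof.

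\emph{Main obstacle.} The technical heart is the rank-$W$ spectral averaging. Naively iterating the scalar Simon--Wolff criterion in each $\alpha_i$ only places the pp support of $\mu_{e_i(0)}$ in the coordinate-wise set $\Sigma_i:=\{E:\sum_x\|G(0,x;E+i0)\,e_i\|^2<\infty\}$, which is a priori larger than $\Sigma=\bigcap_i\Sigma_i$. The full matrix structure of $\mathcal{M}_0$, encoded in the determinantal equation $\det(I+\lambda F_0 D)=0$, and the joint independence of the $\alpha_i$'s, are both needed to reduce the support to the intersection. The cyclicity and zero-one steps are comparatively routine once the independence and irreducibility are in hand.
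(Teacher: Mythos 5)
The paper's proof is considerably shorter and more direct than your proposal, and it sidesteps the difficulty you flag. What the paper does: apply the \emph{scalar} Simon--Wolff criterion \cite{SW} separately to each rank-one perturbation parameter $V(0)_{jj}$, $j = 1,\dots,W$, concluding that for each fixed $j$ the continuous part of $\mu_{e_j(0)}$ is a.s.\ supported on
\[
S_j = \Bigl\{ E : \sum_{x}\sum_{i} |G(0,x;E+i0)_{j,i}|^2 = \infty \Bigr\}
\]
and the pure-point part on its complement. Irreducibility (assumption~2) is then invoked to assert that the sets $S_j$ are (a.s., Lebesgue-a.e.) independent of $j$, so they all coincide with $\Sigma^c$, and equivalence of matrix norms on the fixed-dimensional space $\mathbb{R}^{W\times W}$ finishes the argument. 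No rank-$W$ spectral averaging, no Weyl--Krein determinantal equation, and no Kolmogorov zero--one step appear.

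Your route is genuinely different: you reframe the conditioning at the root as a rank-$W$ perturbation, introduce the determinantal eigenvalue equation $\det(I+\lambda F_0 D) = 0$, and attempt a joint (multi-parameter) spectral averaging via Fubini. Your own diagnosis of the obstacle is correct: iterating the scalar criterion one $\alpha_i$ at a time only places the pp support of $\mu_{e_i(0)}$ inside a single row/column set, a priori larger than the intersection $\Sigma$. This is exactly the point at which the paper uses irreducibility --- not to drive a multi-variable averaging, but to argue that the $W$ single-row sets $S_j$ are all \emph{equal}; once they coincide, the distinction between union and intersection disappears and the gap you describe dissolves. In other words, the ingredient you are missing is not a stronger averaging lemma but the identification of the $S_j$'s, and this is the role of assumption~2 in the paper. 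Since you explicitly leave the Fubini step as an acknowledged gap, the proposal as written is incomplete.

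Two smaller points. First, the scalar criterion applied at $(0,j)$ produces a \emph{row} sum $\sum_x \sum_i |G(0,x)_{j,i}|^2$; your $\Sigma_i$ is written with a column sum $\|G(0,x)e_i\|^2$. These agree after transposition because $G$ is symmetric, but the discrepancy should be noted. Second, the Kolmogorov tail-measurability claim is dubious: the event $\{\sum_x \|G(0,x;E+i0)\|^2<\infty\}$ is \emph{not} a tail event with respect to $\{U(x)\}$ (changing $U(0)$ alters every block $G(0,x)$), so the zero--one law does not apply directly. The paper avoids needing such an argument altogether, since the set $\Sigma$ is defined directly as a deterministic set (``almost surely'' is built into its definition).
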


\begin{proof}
By the usual Simon--Wolff criterion \cite{SW}, the continuous spectrum is almost surely supported on
the set 
\[ S_j = \left\{ \sum_x \sum_i |G(0, x; E+i0)_{j,i}|^2 = \infty \right\}~,\]
and the pure point spectrum is almost surely supported on its complement. By assumption~2.,
the set $S_j$ is (almost surely) independent of $j$. Therefore it coincides with
\[ \left\{ \sum_x \sum_{ij} |G(0, x; E+i0)_{j,i}|^2 = \infty \right\}~,\]
and the latter coincides with
\[ \left\{ \sum_x \| G(0, x; E+i0)\|^2 = \infty \right\} \]
due to equivalence between norms.
\end{proof}

Now, Claim~\ref{cl:1ident} yields
\[ \|G(0, x; z) \| = \| G(x, x; z)^* G^{\mathcal{T}_x}(0, x_-; z)^* \| 
  \geq \|G(x, x;z)^* G^{\mathcal{T}_x}(0, x_-;z)^* w \|\]
for any unit vector $w$ (from this point we suppress the dependence on $\lambda$, and $x_-$ stands for the
backward neighbor of a vertex $x$). 
Let $v =  G^{\mathcal{T}_x}(0, x_-;z)^* w$ and $\tilde{v} = v / \|v\|$. Then
\begin{equation}\label{eq:bdin2'}
\begin{split}
\|G(0, x;z) \| 
  &\geq \|G(x, x;z)^* v \| \\
  &\geq | \langle G(x, x;z)^* v, \tilde{v} \rangle|
  = \| v \| \, | \langle G(x, x;z) \tilde{v}, \tilde{v} \rangle|~. 
\end{split}
\end{equation}
Let 
\[ w = w_{\max}(G^{\mathcal{T}_x}(0, x_-;z) G^{\mathcal{T}_x}(0, x_-;z)^*) \]
be the unit eigenvector of $G^{\mathcal{T}_x}(0, x_-;z) G^{\mathcal{T}_x}(0, x_-;z)^*$ associated with the largest eigenvalue; then 
$\tilde{v} = w_{\max}(G^{\mathcal{T}_x}(0, x_-;z)^*G^{\mathcal{T}_x}(0, x_-;z))$.
Denote
\[ E_x = \left\{ | \langle G(x, x; E + i \eta) \tilde{v}, \tilde{v} \rangle| \geq \tau \equiv e^{+ (L(E) + 2\delta) n} \right\}~,\]
\[ R_x = \left\{ \|G^{\mathcal{T}_{x}}(0, x_-; E+i\eta) \| \geq e^{- (L(E) + \delta) n} \right\}~, \] 
and 
\[ N = \sum_{x \in S_n} \mathbbm{1}_{R_x \cap E_x}~,\]
where $S_n = \mathcal{N}_+^n (0)$ is the sphere of radius $n$ about the root.                          
According to (\ref{eq:bdin2'}), 
\[ \|G(0,x; E+i\eta)\| \geq e^{\delta n} \quad \text{on} \quad R_x \cap E_x~. \]

\begin{prop}[First moment bound]\label{prop:2firstmoment} For $U(x) = A + \lambda V(x)$, where $V(x)$ are drawn from
the Gaussian Orthogonal Ensemble, 
\[ \mathbb{E} N \geq \frac{1}{C(\lambda) \tau} K^n \]
when $n$ is large enough and $\eta > 0$ is small enough.
\end{prop}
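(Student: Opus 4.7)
By the i.i.d.\ structure of the potential and the homogeneity of $\mathcal T$, linearity of expectation gives $\mathbb{E} N = K^n\,\mathbb{P}(R_{x_0} \cap E_{x_0})$ for any fixed $x_0 \in S_n$, so it suffices to show $\mathbb{P}(R_{x_0} \cap E_{x_0}) \geq 1/(C(\lambda)\tau)$. Let $\mathcal F$ denote the $\sigma$-algebra generated by $\{U(y) : y \neq x_0\}$. Both $R_{x_0}$ and the direction $\tilde v$ depend only on $G^{\mathcal{T}_{x_0}}(0, (x_0)_-; z)$, hence are $\mathcal F$-measurable; by Claim~\ref{cl:1ident}, $G(x_0, x_0; z) = (U(x_0) - M)^{-1}$ with $M := zI + \sum_{y \in \mathcal{N}_{x_0}} G^{\mathcal{T}_{x_0}}(y,y;z)$ also $\mathcal F$-measurable. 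Using the rotation invariance of the GOE, I will work in a basis for which $\tilde v = e_1$ and set $Y := G(x_0, x_0; z)_{11}$, so that $E_{x_0} = \{|Y| \geq \tau\}$.

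Conditioning additionally on $\{V(x_0)_{ij} : (i,j) \neq (1,1)\}$, the rank-one resolvent identity applied to the projection $|\delta_{x_0} \otimes e_1\rangle\langle \delta_{x_0} \otimes e_1|$ gives
\[ Y = \frac{1}{\lambda V(x_0)_{11} + c}, \qquad c = \frac{1}{Y_0}, \]
where $Y_0$ is the value of $Y$ at $V(x_0)_{11} = 0$; note $\Im c < 0$ strictly whenever $\eta > 0$. Since $V(x_0)_{11}$ is a real Gaussian, for any deterministic $c = c_1 + i c_2$ with $|c_2| \leq 1/(2\tau)$ and $|c_1| \leq K_0$, direct computation yields
\[ \mathbb{P}(|Y| \geq \tau \mid c) \;=\; \mathbb{P}\!\bigl(|\lambda V(x_0)_{11} + c| \leq 1/\tau\bigr) \;\geq\; \frac{c_0(\lambda, K_0)}{\tau}, \]
since the integration set is an interval of length at least $\sqrt{3}/(\lambda \tau)$ centered at $-c_1/\lambda$, on which the Gaussian density is bounded below. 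Averaging this conditional bound against the joint law of $(\mathbbm 1_{R_{x_0}}, c)$ reduces the proposition to a lower bound on
\[ \mathbb{P}\!\bigl( R_{x_0} \cap \{|\Re c| \leq K_0\} \cap \{|\Im c| \leq 1/(2\tau)\} \bigr). \]

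The three events in this intersection are treated separately. The Furstenberg--Kesten definition of $L(E)$ (Claim~\ref{cl:2fk}) yields $\mathbb{P}(R_{x_0}) \to 1$ as $n \to \infty$, and $\mathbb{P}(|\Re c| > K_0) \to 0$ as $K_0 \to \infty$ uniformly in $\eta$ and $n$, via standard fractional-moment bounds on Green function entries. \emph{The main obstacle} is the condition $|\Im c| \leq 1/(2\tau)$: since $|\Im c| = \Im Y_0 / |Y_0|^2$ and $\tau$ is exponentially large in $n$, one must take $\eta$ correspondingly small and argue that $\Im Y_0(E+i\eta)$ becomes small with positive probability, invoking the absence of an absolutely continuous component in the spectral measure of the restricted operator at the relevant energy. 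This is the genuinely matrix-valued analogue of the central estimate in the scalar case of \cite{AW}, and is where the conditional absolute continuity of the GOE (as highlighted in the introduction) enters decisively.
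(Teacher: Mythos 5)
Your route is essentially the paper's route in different clothing: where the paper projects onto $\tilde v$ and applies the Schur--Banachiewicz formula to write $\langle G(x,x;z)\tilde v,\tilde v\rangle = (g - \sigma)^{-1}$ with $g = \lambda P V(x) P$ Gaussian and $\sigma$ independent of $g$ (Lemma~\ref{l:1}), you instead rotate the GOE so that $\tilde v = e_1$ and invoke the rank-one resolvent identity, getting $Y = 1/(\lambda V(x_0)_{11} + c)$. These are the same maneuver; your $c$ is the paper's $-\sigma$. Your reduction $\mathbb{E}N = K^n \mathbb{P}(R_{x_0}\cap E_{x_0})$, the conditional Gaussian computation giving $c_0/\tau$, and the Chebyshev-type control of $|\Re c|$ (which is the content of Lemma~\ref{l:2}) all match the paper's structure.

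The gap you flag at the end -- that one needs $|\Im c| \lesssim 1/\tau$, an exponentially strong constraint -- is real, and you have correctly diagnosed why the estimate would otherwise be false: if $\Im c$ is bounded away from $0$, the interval $\{|\lambda V_{11}+c|\le 1/\tau\}$ is empty because $V_{11}$ is real. But your proposal leaves this step as an aspiration rather than an argument, and the final sentence misattributes the mechanism: it is not the conditional absolute continuity of the GOE (that only feeds the $1/\tau$ lower bound on the conditional probability) but the \emph{no-a.c.\ hypothesis} on $H$ itself. The paper closes the gap by first passing to $\eta = 0$ via \cite[Corollary~4.10]{AW}, and then using -- implicitly, since Proposition~\ref{prop:2firstmoment} is invoked only inside Proposition~\ref{prop:thm:4.6'*} and Proposition~\ref{prop:thm:4.6*}, both of which are restricted to energies where no-a.c.\ holds -- that under the no-a.c.\ hypothesis $G^{\mathcal T_x}(y,y;E+i0)$ is real symmetric (Claim~\ref{cl:6}), hence $\sigma$ is real and $\Im c = 0$ almost surely. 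With that in hand, the estimate goes through at $E+i0$ and extends to small $\eta>0$ by continuity. So: correct strategy, correct identification of the crux, but the resolution of the crux is missing and the attribution of its source is off.
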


\begin{proof}
By continuity in $\eta \to +0$ which holds for almost every energy (cf.\ \cite[Corollary~4.10]{AW}),
it is sufficient to prove the statement for $E + i0$.

Denote by $P$ the projection on 
\[ \tilde{v} =  w_{\max}(G^{\mathcal{T}_x}(0, x_-; E+i\eta)^*G^{\mathcal{T}_x}(0, x_-;E+i\eta))~; \]
$\tilde{v}$ is independent of $V(x)$. Also set $Q = \mathbbm{1} - P$. By Claim~\ref{cl:1ident},
\begin{multline}
\langle G(x, x; E + i \eta) \tilde{v}, \tilde{v} \rangle \\
  = P \left(A + \lambda V(x) - E - i\eta - \sum_{y \in N_x} G^{\mathcal{T}_x}(y,y; E + i\eta)\right)^{-1} P~.
\end{multline}
By the Schur--Banachiewicz formula
\[ PT^{-1}P = (P T P - P T Q (Q T Q)^{-1} Q T P )^{-1}~, \]
we have
\[ \langle G(x, x; E + i \eta) \tilde{v}, \tilde{v} \rangle = (g - \sigma)^{-1}~,\]
where $g = \lambda P V(x) P$ is Gaussian, and 
\begin{multline}\label{eq:sigma}
\sigma = -PAP + z + \sum_{y \in N_x} P G^{\mathcal{T}_x}(y,y; E+i\eta) P  \\
  + \left( P U(x) Q - \sum_{y \in N_x} P G^{\mathcal{T}_x}(y,y; E+i\eta) Q \right) \\
    \left( Q U(x) Q - z - \sum_{y \in N_x} Q G^{\mathcal{T}_x}(y,y; E+i\eta) Q\right)^{-1} \\
    \left( Q U(x) P - \sum_{y \in N_x} Q G^{\mathcal{T}_x}(y,y; E+i\eta) P \right)~.
\end{multline}

\begin{lemma}\label{l:1}
The random variable $\sigma$ is independent of $g$.
\end{lemma}

\begin{proof}(Uses off-diagonal randomness)
This fact is an immediate corollary of the following property of the Gaussian Orthogonal
Ensemble: for every orthogonal projection $P$, $PV(x)P$ is independent of 
\[ \left\{ (1-P)V(x)P, PV(x)(1-P), (1-P)V(x)(1-P)\right\}~. \]
\end{proof}

\begin{lemma}\label{l:2}
There exists $0 < s < 1$ so that
\[ \mathbb{E} |\sigma|^s \leq C~, \]
where $C > 0$ is a constant. 
\end{lemma}

\begin{proof}
We bound the $s$-moment of every term in (\ref{eq:sigma}). The bound on
\[ \mathbb{E} \left| \sum_{y \in N_x} P G^{\mathcal{T}_x}(y,y; E+i\eta) P \right|^s \]
follows from \cite[A.1]{AW}. It therefore remains to bound the $s$-moment of the multipliers in
(\ref{eq:sigma}) (then the $s/3$-moment of the product is bounded by Cauchy--Schwarz). The expressions
\[ \mathbb{E} \|PV(x)Q\|^s~, \quad \mathbb{E} \|QV(x)P\|^s\]
are estimated directly (they are finite e.g.\ for $s = 2$); the $s$-moment of the second multiplier in
(\ref{eq:sigma}) can be bounded using an argument similar to the upper bound in Lemma~\ref{l:3} below.
\end{proof}

Having the two lemmata, we can conclude the proof of Proposition~\ref{prop:2firstmoment}.
By Chebyshev's inequality and Lemma~\ref{l:2}, 
\begin{equation}\label{eq:cheb} \mathbb{P} \left\{ |\sigma|\leq t \right\} \geq 1 - C' / t^s \end{equation}
can be made arbitrarily close to $1$ by choosing $t$ large enough. Now we estimate $\mathbb{E} N$ as follows:
first,
\[ \mathbb{E} N = \sum_{x \in S_n} \mathbb{P}(R_x \cap E_x) = K^n \mathbb{P}(R_x \cap E_x)~. \]
Then
\[\begin{split}
 \mathbb{P}(R_x \cap E_x)
  &= \mathbb{P} \left( R_x \cap \big\{ |\lambda g - \sigma| \leq \tau^{-1} \big\} \right) \\
  &\geq \mathbb{P} \left( R_x \cap \{ |\sigma| \leq t \} \cap \big\{ |\lambda g - \sigma| \leq \tau^{-1} \big\} \right) \\
  &= \mathbb{E} \left( \mathbbm{1}_{R_x} \mathbbm{1}_{|\sigma|\leq t} \,\, 
    \mathbb{P} \left\{ |g - \sigma|\leq \frac{1}{\lambda \tau} \, \mid \, R_x , \sigma \right\}  \right)~.
  \end{split}
\]
From Lemma~\ref{l:1},
\[ \mathbb{P} \left\{ |g - \sigma|\leq \frac{1}{\lambda \tau} \, \big| \, R_x , \sigma \right\}  
  \geq \frac{1}{C_{\lambda,t} \tau} \mathbbm{1}_{|\sigma|\leq t}~, \]
therefore
\[ \mathbb{P}(R_x \cap E_x) \geq  \frac{1}{C_{\lambda,t} \tau} \mathbb{P} \left( R_x \cap \{ |\sigma|\leq t \} \right)~.\]
Choosing $n$ and $t$ large enough, we get 
\[ \mathbb{P} (R_x) \geq 3/4 \]
from Claim~\ref{cl:2fk} and 
\[ \mathbb{P}  \{ |\sigma|\leq t \} \geq 3/4~, \]
from (\ref{eq:cheb}), hence  
\[ \mathbb{P} \left( R_x \cap \{ |\sigma|\leq t \} \right) \geq 1/2 \]
and 
\[\mathbb{P}(R_x \cap E_x) \geq \frac{1}{2 C_{t,\lambda} \tau}~.\]
\end{proof}

Next, we bound the second moment of $N$ from above. The first ingredient is
\begin{lemma}\label{l:3}
For $s \in (0, 1)$,
\[ {C_-^{-1}(s, z)} \leq \frac{\mathbb{E} \|G^{\mathcal{T}_x}(0, x_-; z)\|^s}
                       {\mathbb{E} \| G^{\mathcal{T}_{u,x}}(0, u_-; z)\|^s \mathbb{E} \|G^{\mathcal{T}_{u,x}}(u_+, x_-; z) \|^s}
  \leq C_+(s,z)~,\]
where $C_\pm(s,z)$ are uniformly bounded as $\Im z \to +0$.
\end{lemma}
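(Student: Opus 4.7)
The natural first step is to factor $G^{\mathcal{T}_x}(0,x_-;z)$ at the intermediate vertex $u$. Applying the block Schur--Banachiewicz formula to $H^{\mathcal{T}_x}-z$ in the decomposition $\ell^2(\mathcal{T}_x\to\mathbb{R}^W)=\ell^2(\{u\}\to\mathbb{R}^W)\oplus\ell^2(\mathcal{T}_{u,x}\to\mathbb{R}^W)$, and using that on a tree the distinct neighbors of $u$ in $\mathcal{T}_x$ lie in distinct connected components of $\mathcal{T}_{u,x}$, gives the identity
\[ G^{\mathcal{T}_x}(0,x_-;z) = G^{\mathcal{T}_{u,x}}(0,u_-;z)\, G^{\mathcal{T}_x}(u,u;z)\, G^{\mathcal{T}_{u,x}}(u_+,x_-;z) =: \phi^-\, M\, \phi^+, \]
with $M=\bigl(U(u)-z-\sum_{y\in\mathcal{N}_u}G^{\mathcal{T}_{u,x}}(y,y;z)\bigr)^{-1}$. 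The factors $\phi^-$ and $\phi^+$ live on the two disjoint components of $\mathcal{T}_{u,x}$ cut off by $u$ and depend on disjoint pieces of the potential, so they are independent; $M$ is coupled to them only through the self-energy $\sum_y G^{\mathcal{T}_{u,x}}(y,y;z)$, a coupling which is frozen once one conditions on all disorder away from $u$.

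The upper bound follows from submultiplicativity: $\|G^{\mathcal{T}_x}(0,x_-;z)\|^s\le\|\phi^-\|^s\|M\|^s\|\phi^+\|^s$. Conditioning on all disorder except $V(u)$ and invoking a matrix adaptation of the fractional-moment estimate \cite[Appendix~A.1]{AW}, together with the dissipativity $\Im\sum_y G^{\mathcal{T}_{u,x}}(y,y;z)\succeq 0$, yields $\mathbb{E}_{V(u)}\|M\|^s\le C_+(s,z)$ uniformly in the frozen data and as $\Im z\to 0^+$; iterated expectation together with the independence of $\phi^-$ and $\phi^+$ then produces the right-hand inequality.

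For the lower bound the loss of multiplicativity of the operator norm is the essential difficulty. The plan is to expose the norm from inside by choosing the top right singular vector $v$ of $\phi^-$ and the top left singular vector $w$ of $\phi^+$, so that
\[ \|\phi^- M\phi^+\|\ge\|\phi^-\|\,\|\phi^+\|\cdot|\langle Mw,v\rangle|. \]
Conditioning again on all disorder away from $u$ freezes $\phi^\pm$, the vectors $v,w$, and the self-energy $\Sigma$; orthogonal invariance of the GOE allows $V(u)$ to be rotated so that $v,w$ lie in the plane spanned by $e_1,e_2$, after which a Schur--Banachiewicz reduction turns $\langle Mw,v\rangle$ into a scalar expression of the form $(\lambda V_{11}+c)^{-1}$, modulo lower-order contributions, in the Gaussian entries of $V(u)$. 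The main obstacle is then to establish $\mathbb{E}_{V(u)}|\lambda V_{11}+c|^{-s}\ge c_-(s,z)$ uniformly in the frozen $c$ (whose real part can be large when $\Re\Sigma$ is) and stably as $\Im z\to 0^+$: the naive Cauchy-type estimate produces a bound $\sim|\Im c|^{1-s}$ which degenerates. To circumvent this I would enlarge the averaging to include the disorder on those side subtrees of $u$ whose Green functions contribute to $\Sigma$ but which are independent of both $\phi^-$ and $\phi^+$: their smoothly distributed contribution smears $\Re c$ over a scale comparable to $\lambda$, producing a finite lower bound. Combined with the independence of $\phi^\pm$, this yields the left-hand inequality with $C_-(s,z)=c_-(s,z)^{-1}$.
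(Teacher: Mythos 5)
Your upper bound argument follows the paper's route (submultiplicativity, conditioning on $V(u)$, Schur--Banachiewicz to write $G^{\mathcal{T}_x}(u,u;z)=(\lambda V(u)-\sigma)^{-1}$, fractional-moment estimate) and is essentially correct; the paper happens to invoke the Aizenman--Molchanov inequalities (II.2)--(II.3) of \cite{AM} rather than \cite[A.1]{AW}, but this is a cosmetic difference.

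The lower bound, however, has a genuine gap. Your plan is to handle the dependence of the conditional lower bound on $\|\sigma\|$ by ``enlarging the averaging to include the disorder on those side subtrees of $u$.'' But this misses the heart of the difficulty. The self-energy
\[
\sigma = -PAP + z + \sum_{y\in\mathcal{N}_u} G^{\mathcal{T}_{u,x}}(y,y;z) + \text{(off-diagonal corrections)}
\]
contains contributions from \emph{all} neighbors of $u$, in particular from $u_-$ and $u_+$. These two terms are not covered by side-subtree averaging: they are built from exactly the same disorder as $\phi^-=G^{\mathcal{T}_{u,x}}(0,u_-;z)$ and $\phi^+=G^{\mathcal{T}_{u,x}}(u_+,x_-;z)$, whose $s$-th moments you want to factor out. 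So no amount of averaging over the remaining side subtrees controls $\|\sigma\|$, and the smearing-at-scale-$\lambda$ heuristic does not produce a uniform lower bound.

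What the paper actually does is insert an explicit truncation $\mathbbm{1}_{\|\sigma\|\le t}$ before applying Proposition~\ref{prop:goe}, factor this indicator into conditions of the form $\mathbbm{1}_{\|G^{\mathcal{T}_{u,x}}(w,w;z)\|\le Ct}$ for $w\in\mathcal{N}_u$, dispatch the $w\neq u_\pm$ factors by Chebyshev, and then — this is the crux — show that multiplying $\|\phi^\mp\|^s$ by the correlated indicator $\mathbbm{1}_{\|G^{\mathcal{T}_{u,x}}(u_\mp,u_\mp;z)\|\le Ct}$ changes its expectation only by a factor $1-\epsilon(t)$. This last step uses a H\"older split combined with the reverse H\"older / decoupling inequality for fractional-linear functions (\cite[Proposition~3.2]{ESS}) applied to the diagonal entries of $V(w)$. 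Your proposal contains no mechanism for decoupling the truncation from the quantities being averaged, and without it the factorization $\mathbb{E}\|\phi^-\|^s\|\phi^+\|^s\mathbbm{1}_{\|\sigma\|\le t}\gtrsim\mathbb{E}\|\phi^-\|^s\,\mathbb{E}\|\phi^+\|^s$ does not follow. That decoupling argument is the main technical content of the lower bound and needs to be supplied.
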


\begin{proof}
We start from Claim~\ref{cl:1ident}:
\begin{equation}\label{eq:ident}
G^{\mathcal{T}_x}(0, x_-; z) = G^{\mathcal{T}_{u,x}}(0, u_-; z) G^{\mathcal{T}_x}(u,u; z) G^{\mathcal{T}_{u,x}}(u_+, x_-; z)~. 
\end{equation}

\vspace{2mm}
{\noindent \bf Upper bound} (Only requires diagonal randomness) Taking norms in (\ref{eq:ident}), we obtain
\[ \| G^{\mathcal{T}_x}(0, x_-; z) \|^s \leq
\| G^{\mathcal{T}_{u,x}}(0, u_-; z) \|^s \| G^{\mathcal{T}_x}(u,u; z) \|^s \|  G^{\mathcal{T}_{u,x}}(u_+, x_-; z) \|^s~.  \]
By construction, $G^{\mathcal{T}_{u,x}}(0, u_-; z)$, $G^{\mathcal{T}_{u,x}}(u_+, x_-; z)$, and $V(u)$ are independent. We shall show 
that
\begin{equation}\label{eq:weg}
\mathbb{E}_{V(u)} \|G^{\mathcal{T}_x}(u,u; z) \|^s \leq C_+(s, z),
\end{equation}
where $\mathbb{E}_{V(u)}$ denotes averageing over $V(u)$ (= conditioning on all the other values of the potential).
Averaging (\ref{eq:weg}) over $\{V(y) \, \mid \, y \neq u \}$, we obtain the upper bound in the lemma. To prove (\ref{eq:weg}),
note that, by the Schur--Banachiewicz formula, 
\[ G^{\mathcal{T}_x}(u,u; z) = (\lambda V(u) - \sigma)^{-1}~, \]
where $\sigma$ is independent of $V(u)$. Therefore
\[ \mathbb{E}_{V(u)} \| G^{\mathcal{T}_x}(u,u; z) \|^s 
  \leq C \lambda^{-s} \sum_{j,k} \mathbb{E} |(V(u) - \sigma)^{-1}_{jk}|^s  = C (I + II)~,\]
where $I$ is the sum of the diagonal terms, and $II$ is the sum of the off-diagonal terms. To bound the diagonal terms,
note that
\[ \mathbb{E}_{V(u)} |(V(u) - \sigma)^{-1}_{jj}|^s = \mathbb{E}_{V(u)} \mathbb{E}_{V(u)_{jj}} |V(u)_{jj} - \tilde{\sigma}|^{-s}~, \]
where $\tilde{\sigma}$ is independent of $V(u)_{jj}$.  Therefore (by the inequality (II.2) from the paper of Aizenman--Molchanov \cite{AM})
\[ \mathbb{E}_{V(u)} |(V(u) - \sigma)^{-1}_{jj}|^s \leq C(s)  \]
and $I \leq C(s) W$.

To bound the off-diagonal terms,  we use inequality (II.3) from \cite{AM}. This concludes the proof of the upper bound.

\vspace{2mm}\noindent
{\bf Lower bound} (Uses off-diagonal randomness) We shall use

\begin{prop}\label{prop:goe}
Let $V$ be a random matrix drawn from GOE, and let $\sigma$ be a fixed matrix. Then for any two vectors
$\phi$ and $\psi$
\[ \mathbb{E} \left| \langle (V - \sigma)^{-1} \phi, \psi \rangle \right|^s \geq C_{\|\sigma\|,s} \|\phi\|^s \, \|\psi\|^s~.\]
\end{prop}

\begin{proof}
We may assume without loss of generality that $\phi = e_1$ (the first vector of the standard basis) and
that $\psi = a e_1 + b e_2$, $a^2 + b^2 = 1$. Then
\[ \langle (V - \sigma)^{-1} \phi, \psi \rangle  
  = a (V-\sigma)^{-1}_{11} + b (V-\sigma)^{-1}_{12}~.\]
By Cramer's rule,
\[ a (V-\sigma)^{-1}_{11} + b (V-\sigma)^{-1}_{12}
  = \frac{a (g_{22} - \tilde{\sigma}_{22}) - b (g_{12} - \tilde{\sigma}_{12})}
         {(g_{11} - \tilde{\sigma}_{11})(g_{22} - \widetilde{\sigma}_{22}) - (g_{12} - \tilde\sigma_{12})(g_{12} - \tilde\sigma_{21})}~, \]
where $g_{ij}$ are Gaussian, and $\tilde\sigma$ is independent of the $g_{ij}$. By H\"older's inequality,
\begin{multline*}
\mathbb{E}_g \left| a (V-\sigma)^{-1}_{11} + b (V-\sigma)^{-1}_{12} \right|^s \\
  \geq \frac{\left[ \mathbb{E}_g \left| a (g_{22} - \tilde{\sigma}_{22}) - b (g_{12} - \tilde{\sigma}_{12}) \right|^{s/2} \right]^2}
            {\mathbb{E}_g \left| (g_{11} - \tilde{\sigma}_{11})(g_{22} - \widetilde{\sigma}_{22}) - 
		(g_{12} - \tilde\sigma_{12})(g_{12} - \tilde\sigma_{21}) \right|^s} 
\end{multline*}
It is easy to see that the denominator is bounded from above by a number depending only on $\tilde\sigma$. The numerator
is bounded from below by a constant independent of $\tilde\sigma$. Averaging over $\tilde\sigma$ concludes the proof of Proposition~\ref{prop:goe}.
\end{proof}

For any two matrices $A$ and $B$ one can find $\phi_0$ and $\psi_0$ so that $\|\phi_0\|=\|\psi_0\|=1$ and
$\|A^*\psi_0\| = \|A\|$, $\|B\phi_0\| = \|B\|$. Then, for $S = (V - \sigma)^{-1}$,
\[ \|ASB\| \geq \left| \langle ASB \phi_0 , \, \psi_0 \rangle \right| = \left| \langle SB \phi_0 , \, A^*\psi_0 \rangle \right|~, \]
and by Proposition~\ref{prop:goe}
\[ \mathbb{E} \|ASB\|^s \geq C^{-1} \|A\|^s \|B \|^s~. \]
Applying this to $A = G^{\mathcal{T}_{u,x}}(0, u_-; z)$, $S = \lambda G^{\mathcal{T}_x}(u,u; z) = (V(x) - \sigma)^{-1}$, and
$B = G^{\mathcal{T}_{u,x}}(u_+, x_-; z)$, we obtain:
\[\begin{split}
&\mathbb{E} \| G^{\mathcal{T}_{u,x}}(0, u_-; z) G^{\mathcal{T}_x}(u,u; z) G^{\mathcal{T}_{u,x}}(u_+, x_-; z)  \|^s \\
&\qquad\geq \mathbb{E} \mathbb{E}_{V(x)} \| G^{\mathcal{T}_{u,x}}(0, u_-; z) G^{\mathcal{T}_x}(u,u; z) G^{\mathcal{T}_{u,x}}(u_+, x_-; z)  \|^s
    \mathbbm{1}_{\|\sigma\| \leq t} \\
&\qquad\geq C_{\lambda,t}^{-1} \mathbb{E}  \| G^{\mathcal{T}_{u,x}}(0, u_-; z) \|^s \| G^{\mathcal{T}_{u,x}}(u_+, x_-; z)  \|^s
    \mathbbm{1}_{\|\sigma\| \leq t} \\
&\qquad\geq C_t^{-1} \mathbb{E}  \| G^{\mathcal{T}_{u,x}}(0, u_-; z) \|^s \| G^{\mathcal{T}_{u,x}}(u_+, x_-; z)  \|^s
    \prod_{w \in \mathcal{N}_u} \mathbbm{1}_{\| G^{\mathcal{T}_{u,x}}(w,w; z) \| \leq C t}~,
\end{split}\]
where we omitted the dependence on $\lambda$ and $W$. This expression is equal to
\begin{multline*} 
C_t^{-1} \left\{ \mathbb{E}  \| G^{\mathcal{T}_{u,x}}(0, u_-; z) \|^s \mathbbm{1}_{\| G^{\mathcal{T}_{u,x}}(u_-,u_-; z) \| \leq C t} \right\}\\
\left\{ \mathbb{E}  \| G^{\mathcal{T}_{u,x}}(u_+, x_-; z)  \|^s \mathbbm{1}_{\| G^{\mathcal{T}_{u,x}}(u_+,u_+; z) \| \leq C t} \right\}
\prod_{w \in \mathcal{N}_u \setminus u_\pm} \left\{ \mathbb{E} \mathbbm{1}_{\| G^{\mathcal{T}_{u,x}}(w,w; z) \| \leq C t} \right\}~.
\end{multline*}
By Chebyshev's inequality, 
\[ \mathbb{E} \mathbbm{1}_{\| G^{\mathcal{T}_{u,x}}(w,w; z) \| \leq C t} \geq 1 - C' t^{-s}\] 
can be made arbitrarily close to $1$ by choosing $t$ large enough. It remains to show that 
\[ \mathbb{E}_{V(w)} \| G^{\mathcal{T}_{u,x}}(w, w'; z)  \|^s \mathbbm{1}_{\| G^{\mathcal{T}_{u,x}}(w,w; z) \| \geq C t}
  \leq \epsilon(t) \mathbb{E}_{V(w)} \mathbb{E}  \| G^{\mathcal{T}_{u,x}}(w, w'; z)  \|^s~, \]
where $\epsilon(t) \to 0$ as $t \to \infty$. We will prove a stronger statement:
\begin{multline*} \mathbb{E}_{V(w)^\mathrm{diag}} \| G^{\mathcal{T}_{u,x}}(w, w'; z)  \|^s \mathbbm{1}_{\| G^{\mathcal{T}_{u,x}}(w,w; z) \| \geq C t}\\
  \leq \epsilon(t) \mathbb{E}_{V(w)^\mathrm{diag}} \mathbb{E}  \| G^{\mathcal{T}_{u,x}}(w, w'; z)  \|^s~, 
\end{multline*}
where $\mathbb{E}_{V(w)^\mathrm{diag}}$ denotes the expectation over the diagonal elements of $V(w)$. Since the dependence on
$W$ is not important for us, it is sufficient to show that, for every $j$ and $k$,
\begin{multline*}
\mathbb{E}_{V(w)^\mathrm{diag}} | G^{\mathcal{T}_{u,x}}(w, w'; z)(j, k) |^s \mathbbm{1}_{\| G^{\mathcal{T}_{u,x}}(w,w; z) \| \geq C t}
  \\
\leq \epsilon(t) \mathbb{E}_{V(w)^\mathrm{diag}} \mathbb{E}  | G^{\mathcal{T}_{u,x}}(w, w'; z) (j,k) |^s~. 
\end{multline*}
Choose $p, q > 1$ so that $1/p + 1/q = 1$ and $sp < 1$. By H\"older's inequality, 
\begin{multline*}
 \mathbb{E}_{V(w)^\mathrm{diag}} | G^{\mathcal{T}_{u,x}}(w, w'; z)(j, k) |^s \mathbbm{1}_{\| G^{\mathcal{T}_{u,x}}(w,w; z) \| \geq C t} 
  \\ 
\leq \left\{ \mathbb{E}_{V(w)^\mathrm{diag}} | G^{\mathcal{T}_{u,x}}(w, w'; z)(j, k) |^{sp}\right\}^{1/p}
  \left\{ \mathbb{E} \mathbbm{1}_{\| G^{\mathcal{T}_{u,x}}(w,w; z) \| \geq C t}  \right\}^{1/q} \\
\leq C' t^{-s/q} \left\{ \mathbb{E}_{V(w)^\mathrm{diag}} | G^{\mathcal{T}_{u,x}}(w, w'; z)(j, k) |^{sp}\right\}^{1/p}~.
\end{multline*}
It remains to show that
\begin{multline}\label{eq:revhoelder}
   \left\{ \mathbb{E}_{V(w)^\mathrm{diag}} | G^{\mathcal{T}_{u,x}}(w, w'; z)(j, k) |^{sp}\right\}^{1/(sp)} \\
  \leq C \left\{ \mathbb{E}_{V(w)^\mathrm{diag}} | G^{\mathcal{T}_{u,x}}(w, w'; z)(j, k) |^{s}\right\}^{1/s}~.  
\end{multline}
The expression $G^{\mathcal{T}_{u,x}}(w, w'; z)(j, k)$ is a fractional-linear function of every diagonal element of
$V(w)$. Therefore (\ref{eq:revhoelder}) follows from the following decoupling lemma
\begin{prop}
Let $X_j$, $1 \leq j \leq W$, be independent identically distributed random variables with bounded density and
finite moments. Then, for every function $f(x_1, \cdots, x_W)$ which is fractional-linear as a function of every variable,
and every $0 < \alpha < \beta < 1$,
\[ (\mathbb{E} |f(X_1, \cdots, X_W)|^\beta)^{1/\beta} \leq C (\mathbb{E} |f(X_1, \cdots, X_W)|^\alpha)^{1/\alpha}~,   \]
where $C>0$ may depend on $\alpha$ and $\beta$ but not on $f$.
\end{prop}
The proof is given (in more general setting) in \cite[Proposition~3.2]{ESS}. This concludes the proof of Lemma~\ref{l:3}.
\end{proof}

Similar considerations allow to extend the arguments leading to two more statements from \cite{AW} to our matrix setting:

\begin{lemma}[Lemma~3.4*]\label{l:3.4} For $s \in (0, 1)$,
\[ \frac{1}{C(s,z)} \leq \frac{\mathbb{E} \|G^{\mathcal{T}_x}(0,x_-; z)\|^s}{\|G^{\mathcal{T}_{x_-}}(0,x_{--}; z)\|^s} \leq C(s,z)~,\]
and
\[ \frac{1}{C(s,z)} \leq \frac{\mathbb{E} \| G(0, x_-; z)\|^s}{\mathbb{E} \|G^{\mathcal{T}_x}(0,x_-;z)\|^s } \leq C(s,z)~,\]
where $C(s, z)$ remainds bounded (for fixed $\Re z$) as $\Im z \to +0$.
\end{lemma}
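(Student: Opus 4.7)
The plan is to reduce both inequalities to a single decoupling step --- peel off the last edge of the path $0 \to x_-$ using Claim~\ref{cl:1ident}, then integrate out the potential at $x_-$. Applied inside the subgraph $\mathcal{T}_x$, Claim~\ref{cl:1ident} gives
\[ G^{\mathcal{T}_x}(0, x_-; z) = G^{\mathcal{T}_{x_-}}(0, x_{--}; z) \, G^{\mathcal{T}_x}(x_-, x_-; z)~, \]
where I use that $(\mathcal{T}_x)_{x_-}$ shares the connected component of $0$ with $\mathcal{T}_{x_-}$; the analogous identity on the full tree reads
\[ G(0, x_-; z) = G^{\mathcal{T}_{x_-}}(0, x_{--}; z) \, G(x_-, x_-; z)~. \]
Now condition on the $\sigma$-algebra $\mathcal{F}$ generated by $\{V(y) \mid y \neq x_-\}$, so that $B := G^{\mathcal{T}_{x_-}}(0, x_{--}; z)$ is $\mathcal{F}$-measurable. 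By Schur--Banachiewicz, both $G^{\mathcal{T}_x}(x_-, x_-; z)$ and $G(x_-, x_-; z)$ are of the form $\lambda^{-1}(V(x_-) - \sigma^{(\cdot)})^{-1}$ with $\mathcal{F}$-measurable ``effective energies'' $\sigma^{(\cdot)}$ that differ only by the single extra summand $G^{\mathcal{T}_{x_-}}(x, x; z)$.

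For the upper direction of both inequalities, submultiplicativity of the norm together with the conditional bound (\ref{eq:weg}) from the proof of Lemma~\ref{l:3} --- which applies unchanged to either $\sigma^{(\cdot)}$, since it only used independence of $V(x_-)$ from $\sigma$ plus Aizenman--Molchanov --- produces
\[ \mathbb{E}_{V(x_-)} \|G^{\mathcal{T}_x}(0, x_-; z)\|^s \leq C \|B\|^s, \qquad \mathbb{E}_{V(x_-)} \|G(0, x_-; z)\|^s \leq C \|B\|^s~. \]
For the lower direction I would apply Proposition~\ref{prop:goe} as in the closing step of the lower bound of Lemma~\ref{l:3}: take $\phi_0 = e_1$ and a unit vector $\psi_0$ with $\|B^*\psi_0\| = \|B\|$, so $\|BS\| \geq |\langle S\phi_0, B^*\psi_0\rangle|$ for $S = (V(x_-) - \sigma)^{-1}$, and Proposition~\ref{prop:goe} yields
\[ \mathbb{E}_{V(x_-)} \|BS\|^s \, \mathbbm{1}_{\|\sigma\| \leq t} \, \geq \, C_t^{-1} \|B\|^s \, \mathbbm{1}_{\|\sigma\| \leq t}~. \]

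The main technical point --- and the one place where I expect to spend genuine effort --- is removing the indicator, i.e.\ showing that the bad event $\{\|\sigma^{(\cdot)}\| > t\}$ contributes negligibly. This is done as in the endgame of the proofs of Lemma~\ref{l:3} and Proposition~\ref{prop:2firstmoment}: Chebyshev against a uniform $s$-moment bound on $\sigma^{(\cdot)}$ of the Lemma~\ref{l:2} type (single-site Aizenman--Molchanov estimates \cite[A.1]{AW} for the diagonal forward Green functions, and a Cauchy--Schwarz factorization for the off-diagonal Schur-complement products). For statement (2), the one additional summand $G^{\mathcal{T}_{x_-}}(x, x; z)$ in $\sigma^{(\mathrm{full})}$ has its $s$-moment bounded once more by (\ref{eq:weg}), so the bookkeeping is identical in the two versions. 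Taking the full expectation of the pointwise bounds then yields
\[ \mathbb{E} \|G^{\mathcal{T}_x}(0, x_-; z)\|^s \, \asymp \, \mathbb{E}\|B\|^s \, \asymp \, \mathbb{E}\|G(0, x_-; z)\|^s~, \]
which simultaneously proves both halves of Lemma~\ref{l:3.4}.
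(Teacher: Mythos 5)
Your argument is correct and is exactly the ``similar considerations'' the paper has in mind: the paper omits a proof of Lemma~\ref{l:3.4} and explicitly refers the reader to the methods used for Lemma~\ref{l:3} and Proposition~\ref{prop:2firstmoment}, which is precisely what you reuse. Your one-step peeling $G^{\mathcal{T}_x}(0,x_-;z)=G^{\mathcal{T}_{x_-}}(0,x_{--};z)\,G^{\mathcal{T}_x}(x_-,x_-;z)$ (and its analogue on the full tree) is the right decomposition, the upper bound via (\ref{eq:weg}) and the lower bound via Proposition~\ref{prop:goe} are the intended single-site inputs, the observation that $\sigma^{\mathrm{full}}-\sigma^{\mathcal{T}_x}=G^{\mathcal{T}_{x_-}}(x,x;z)$ is what makes the two halves of the lemma come out of the same computation, and you correctly flag that removing the indicator $\mathbbm{1}_{\|\sigma\|\leq t}$ requires the fractional-linear decoupling estimate of \cite[Proposition~3.2]{ESS} (not bare Chebyshev), since $\|B\|$ and $\sigma$ are correlated through $G^{\mathcal{T}_{x_-}}(x_{--},x_{--};z)$.
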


\begin{prop}[Theorem~3.2*]\label{p:thm:3.2}
Let 
\[ \phi_\lambda(s; z)  = \lim_{\mathrm{dist}(x,0) \to \infty} \log \mathbb{E} \|G_\lambda(0,x; z)\|^s~. \]
For any $z \in \mathbb{C}^+$ the function $(0, \infty) \ni s \mapsto \phi_\lambda(s; z)$ has the following properties:
\begin{enumerate}
 \item $\phi_\lambda(\cdot, z)$ is convex and non-increasing;
 \item for $s \in (0, 2]$, 
\[ -s L(z) \leq \phi_\lambda(s; z) \leq - s \log \sqrt{K}~; \]
 \item for any $s \in (0, 1)$ and $x \in \mathcal{T}$,
\[ \frac{1}{C(s, z)} e^{\phi_\lambda(s; z) \mathrm{dist}(x, 0)} \leq \mathbb{E} \|G_\lambda(0, x; z)\|^s 
  \leq C(s, z) e^{\phi_\lambda(s; z) \mathrm{dist}(x, 0)}~,\]
where $C(s,z) \in (0, \infty)$; if $s \in (0, 1)$, $C(s, z)$ remains bounded as $\Im z \to +0$.
\end{enumerate}
\end{prop}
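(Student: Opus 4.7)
The plan is to first establish existence of the limit $\phi_\lambda(s;z)$ together with the bound in~(3) by exploiting an approximate multiplicativity of $\mathbb{E}\|G^{\mathcal{T}_x}_\lambda(0, x_-; z)\|^s$ along paths, and then to derive properties~(1) and~(2) from convexity and standard moment bounds on the resolvent. Let $a_n(s,z) = \log \mathbb{E}\|G^{\mathcal{T}_x}_\lambda(0, x_-; z)\|^s$; by translation invariance this depends on $x$ only through $n = \mathrm{dist}(0, x)$. Inserting an intermediate vertex $u$ on the path from $0$ to $x$ and applying Lemma~\ref{l:3} yields, for $s \in (0, 1)$,
\[ a_k + a_{n-k-1} - C(s,z) \leq a_n \leq a_k + a_{n-k-1} + C(s,z), \]
and a two-sided Fekete argument (applied both to $a_n + C$ and to $-a_n + C$) gives $a_n/n \to \phi_\lambda(s;z)$ with $|a_n - n\phi_\lambda(s;z)| \leq C(s,z)$; this is property~(3) at the level of $G^{\mathcal{T}_x}$, which is transferred to $G$ via Lemma~\ref{l:3.4}. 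For $s \geq 1$ only the submultiplicative side of Lemma~\ref{l:3} survives, but the deterministic bound $\|G^{\mathcal{T}_x}(u, u; z)\| \leq (\Im z)^{-1}$ lets us eliminate the middle factor in the identity~(\ref{eq:ident}), and the outer factors are genuinely independent (being supported on disjoint subtrees), so subadditivity still holds and Fekete still yields existence of $\phi_\lambda(s;z)$ for all $s > 0$.

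\textbf{Properties~(1) and~(2).} Convexity of $s \mapsto \log \mathbb{E} X^s$ is H\"older's inequality, and is preserved under pointwise limits. Monotonicity follows from the deterministic bound $\|G_\lambda\| \leq (\Im z)^{-1}$: for $0 < s < t$ one has $\|G\|^t \leq (\Im z)^{s-t}\|G\|^s$, hence $a_n(t) \leq a_n(s) + (t-s)\log(\Im z)^{-1}$, and the constant disappears upon dividing by $n$. For the upper bound in~(2), summing $\sum_x \|G_\lambda(0, x; z) e\|^2 \leq (\Im z)^{-2}$ over an orthonormal basis of the fiber at the root and taking expectation gives $\sum_x \mathbb{E}\|G_\lambda(0, x; z)\|^2 \leq W^2 (\Im z)^{-2}$; combined with translation invariance and $|S_n| = K^n$ this forces $\phi_\lambda(2;z) \leq -\log K$. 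From $\phi_\lambda(0;z) = 0$ (by dominated convergence, since $\|G\|$ is bounded) and convexity, the chord inequality yields $\phi_\lambda(s;z)/s \leq \phi_\lambda(2;z)/2 \leq -\log\sqrt{K}$ on $(0, 2]$. For the lower bound, Jensen's inequality gives $\log \mathbb{E}\|G_\lambda(0, x; z)\|^s \geq s\,\mathbb{E}\log\|G_\lambda(0, x; z)\|$; dividing by $n$ and invoking the Furstenberg--Kesten theorem (Claim~\ref{cl:2fk}) yields $\phi_\lambda(s;z) \geq -sL(z)$.

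\textbf{Main obstacle.} The heart of the argument is the supermultiplicative side of the approximate multiplicativity, which is exactly Lemma~\ref{l:3}; its proof uses off-diagonal randomness through Proposition~\ref{prop:goe}, and its restriction to $s \in (0, 1)$ is precisely why property~(3) is stated in that range. Everything else is robust: the submultiplicative direction, convexity, monotonicity, and the $\ell^2$/Jensen moment bounds all go through with only the deterministic resolvent bound $\|G\| \leq (\Im z)^{-1}$.
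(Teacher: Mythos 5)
Your proof is correct and is essentially the intended argument: the paper does not spell out a proof of Proposition~\ref{p:thm:3.2} but instead says that ``similar considerations'' to those proving Lemma~\ref{l:3} extend the argument of \cite[Theorem~3.2]{AW}, and your chain --- approximate multiplicativity from Lemma~\ref{l:3}, a two-sided Fekete argument yielding existence of the limit together with the two-sided bound in~(3) for $s\in(0,1)$, transfer from $G^{\mathcal{T}_x}$ to $G$ via Lemma~\ref{l:3.4}, convexity from H\"older, monotonicity from the deterministic bound $\|G\|\le(\Im z)^{-1}$, the bound $\phi_\lambda(2;z)\le-\log K$ from the $\ell^2$ resolvent estimate combined with $|S_n|=K^n$, and the lower bound $\phi_\lambda(s;z)\ge -sL(z)$ from Jensen and the Furstenberg--Kesten/Kingman theorem --- is exactly that adaptation. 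You also correctly pin down why~(3) is restricted to $s\in(0,1)$: the supermultiplicative side of Lemma~\ref{l:3} (and hence Proposition~\ref{prop:goe} and the off-diagonal randomness) is the only ingredient that fails for larger $s$, whereas the one-sided subadditivity, convexity, monotonicity, and the moment bounds only need the deterministic resolvent estimate.

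Three small points worth tightening. First, the constant in the $\ell^2$ step should be $W$ rather than $W^2$: summing $\sum_x\|G_\lambda(0,x;z)e_j\|^2\le(\Im z)^{-2}$ over $j=1,\dots,W$ gives $\sum_x\|G_\lambda(0,x;z)\|_{\mathrm{HS}}^2\le W(\Im z)^{-2}$ and one then uses $\|\cdot\|\le\|\cdot\|_{\mathrm{HS}}$; this does not affect $\phi_\lambda(2;z)\le-\log K$. Second, ``$\phi_\lambda(0;z)=0$ by dominated convergence'' is imprecise: dominated convergence gives $a_n(s)\to 0$ as $s\to0^+$ at fixed $n$, not directly $\phi_\lambda(0^+;z)=0$; the clean route is to squeeze $-sL(z)\le\phi_\lambda(s;z)\le0$ (the upper bound from $\|G\|\le(\Im z)^{-1}$, the lower from Jensen), which gives $\phi_\lambda(0^+;z)=0$ and then the chord inequality. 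Third, the Jensen step implicitly uses that the a.s.\ Lyapunov limit of Claim~\ref{cl:2fk} coincides with the $L^1$ limit $\lim_n n^{-1}\mathbb{E}\log\|G_\lambda(0,x_n;z)\|$; this is guaranteed by Kingman's theorem under the stated integrability hypothesis, but it deserves a word.
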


\begin{df}
The {\em no-a.c.\ hypothesis} holds at energy $E \in \mathbb{R}$ if, for a fixed vector $v$,
\[ \Im \langle (H - E - i0)^{-1} v , v \rangle = 0 \]
almost surely.
\end{df}
Note that the definition does not depend on the choice of the vector $v$.

\begin{cl}\label{cl:6}
Under the no-ac hypothesis $G(0, 0; E+i0)$ is almost surely real symmetric. 
\end{cl}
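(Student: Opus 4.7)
The plan is to combine the fact that $H$ is a real operator (so that the real and imaginary parts of $G(0,0;z)$ are separately real symmetric $W\times W$ matrices) with the no-a.c.\ hypothesis applied to a finite collection of test vectors. Since the matrix $A$ (as well as the GOE matrices $V_\omega(x)$) is real symmetric, $H$ commutes with the natural complex conjugation on $\ell^2(\mathcal{T}\to\mathbb{R}^W)\otimes\mathbb{C}$. The functional calculus then yields, for $\eta>0$,
\[ G(0,0;E+i\eta) = A_\eta + i B_\eta, \]
where
\[ A_\eta = (H-E)\big[(H-E)^2+\eta^2\big]^{-1}\Big|_{(0,0)},\qquad B_\eta = \eta\big[(H-E)^2+\eta^2\big]^{-1}\Big|_{(0,0)} \]
are real symmetric $W\times W$ matrices. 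Passing to the boundary value $\eta \to 0^+$, which is implicit in the formulation of the no-a.c.\ hypothesis, we obtain the decomposition $G(0,0;E+i0) = A + iB$ with $A,B$ real symmetric.

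Next, since the no-a.c.\ hypothesis is independent of the choice of test vector, it applies simultaneously to the finite collection $\iota e_i$, $1\le i\le W$, and $\iota(e_i + e_j)$, $1\le i<j\le W$, where $\iota\colon\mathbb{R}^W \hookrightarrow \ell^2(\mathcal{T}\to\mathbb{R}^W)$ denotes the embedding at the root. Intersecting the corresponding finitely many full-measure events, we get, almost surely, $\langle Be_i,e_i\rangle = 0$ and $\langle B(e_i+e_j),e_i+e_j\rangle = 0$ for all $i<j$. The polarization identity
\[ 2\langle Be_i,e_j\rangle = \langle B(e_i+e_j),e_i+e_j\rangle - \langle Be_i,e_i\rangle - \langle Be_j,e_j\rangle, \]
valid because $B$ is symmetric, then forces $B = 0$ almost surely, so $G(0,0;E+i0) = A$ is real symmetric.

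The argument is essentially an exercise in polarization, with no substantive obstacle; the main point is that the reality of $H$ produces the clean real symmetric decomposition of $G(0,0;z)$, which reduces the $W\times W$ matrix statement to the scalar no-a.c.\ hypothesis applied to a finite set of test vectors supported at the root.
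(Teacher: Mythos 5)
Your argument is correct, but the route is genuinely different from the paper's. The paper applies the no-a.c.\ hypothesis to both real and \emph{complex} test vectors at the root: $v = \delta(0,j)$ gives that each diagonal entry is real, while $v_1 = \delta(0,j)+\delta(0,k)$ and $v_2 = \delta(0,j)+i\delta(0,k)$ give, respectively, that $G_{jk}+G_{kj}$ is real and $G_{jk}-G_{kj}$ is pure imaginary; combining these yields $G_{kj} = \overline{G_{jk}}$, i.e.\ $G(0,0;E+i0)$ is Hermitian. Reality then follows by invoking the a priori complex symmetry $G(0,0;z)=G(0,0;z)^T$ (a Hermitian symmetric matrix is real). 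Your proof instead extracts the reality of $H$ at the outset: since $H$ commutes with entrywise conjugation, $G(0,0;E+i\eta)$ already decomposes as $A_\eta + iB_\eta$ with $A_\eta,B_\eta$ real symmetric, so you only need to kill $B = \lim_{\eta\to 0^+} B_\eta$, and for that the no-a.c.\ hypothesis applied to \emph{real} test vectors $e_i$ and $e_i+e_j$ plus polarization for symmetric bilinear forms suffices. Both arguments are polarization in disguise and both ultimately rest on the same structural fact ($H^T=H$, equivalently $H$ real): the paper uses it at the end (symmetry of the resolvent block) while you use it at the start (real/imaginary decomposition); the paper trades this for one extra complex test vector, your version keeps all test vectors real. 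Your remark about intersecting the finitely many full-measure null events is correct and worth making explicit; the paper leaves it implicit.

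One small caveat worth flagging: the paper introduces $A$ only as a Hermitian matrix, and both your argument and the paper's closing line ``$G$ is always symmetric'' tacitly require $A$ to be real symmetric (as is natural for a Bethe-strip adjacency matrix and consistent with the GOE setting). You state this explicitly, which is a point in your favor, but it is an assumption being added, not derived.
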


\begin{proof}
Let us show that
\begin{equation}\label{eq:sa}
G(0,0; E+i0)_{kj} =  \overline{G(0,0; E+i0)_{jk}}~. 
\end{equation}
For $j = k$ this follows drectly from the definition (applied to $v = (0, j)$). For $j \neq k$, apply
the definition to 
\[ v_1 = \delta(0, j) + \delta(0, k)~, \quad v_2 = \delta(0, j) + i \delta(0, k)~. \]
We obtain that
\begin{multline*}\langle G(0,0; E+i0) v_1, v_1 \rangle \\
  = G(0,0; E+i0)_{jj} + G(0,0; E+i0)_{kk} + G(0,0; E+i0)_{jk} + G(0,0; E+i0)_{kj}
\end{multline*}
is real, hence 
\[ G(0,0; E+i0)_{jk} + G(0,0; E+i0)_{kj} \]
is real; also,
\begin{multline*} \langle G(0,0; E+i0) v_2, v_2 \rangle\\
 = G(0,0; E+i0)_{jj} + G(0,0; E+i0)_{kk} - i G(0,0; E+i0)_{jk} + i G(0,0; E+i0)_{kj}
\end{multline*}
is real, hence
\[ G(0,0; E+i0)_{jk} - G(0,0; E+i0)_{kj} \]
is pure imaginary. To conclude the proof of (\ref{eq:sa}), note that if $a + b$ is real and $a - b$ is pure imaginary, then
$a = \bar{b}$. 

$G$ is always symmetric, hence (\ref{eq:sa}) implies that $G(0,0; E+i0)$ is real symmetric.
\end{proof}

\begin{cl}\label{cl:7}
For any real symmetric $W \times W$ matrix $A$,
\[ \| A \| \leq C_W \max \left\{ \max_j |\langle A e_j , e_j \rangle|, \max_{j\neq k} |\langle A (e_j + e_k) , (e_j + e_k) \rangle| \right\}~. \]
\end{cl}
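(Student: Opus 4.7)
The plan is to extract bounds on every entry of the real symmetric matrix $A$ from the quantities appearing on the right-hand side, and then bound the operator norm by (say) the Frobenius norm, paying only a $W$-dependent constant.

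Let $M$ denote the maximum on the right-hand side. First, the diagonal entries are controlled directly: $|A_{jj}| = |\langle A e_j, e_j\rangle| \leq M$ for each $j$. For the off-diagonal entries, I would use polarization: expanding the quadratic form gives
\[ \langle A(e_j + e_k), e_j + e_k\rangle = A_{jj} + A_{kk} + 2 A_{jk}, \]
where I use that $A$ is real symmetric so $A_{jk} = A_{kj}$. Solving for $A_{jk}$ and applying the triangle inequality yields $|A_{jk}| \leq \tfrac{1}{2}(|A_{jj}| + |A_{kk}| + |\langle A(e_j + e_k), e_j + e_k\rangle|) \leq \tfrac{3}{2} M$ for each $j \neq k$.

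Combining these two bounds, every entry of $A$ satisfies $|A_{jk}| \leq \tfrac{3}{2} M$, so the Frobenius norm satisfies $\|A\|_F^2 \leq \tfrac{9}{4} W^2 M^2$, and therefore
\[ \|A\| \leq \|A\|_F \leq \tfrac{3}{2} W \, M, \]
which gives the desired inequality with $C_W = \tfrac{3}{2} W$.

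There is no real obstacle here; this is a purely elementary linear-algebraic fact, and the only minor point is that the proof genuinely uses the hypothesis that $A$ is real symmetric (so that the polarization identity on real vectors recovers $A_{jk}$ without an imaginary cross term). In the application of the claim earlier in the paper, this symmetry is precisely what Claim~\ref{cl:6} provides for $G(0,0;E+i0)$ under the no-a.c.\ hypothesis, so the two claims dovetail.
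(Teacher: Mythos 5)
Your proof is correct and uses the same two ingredients as the paper's: the polarization identity $\langle A(e_j+e_k), e_j+e_k\rangle = A_{jj} + A_{kk} + 2A_{jk}$ to control the off-diagonal entries, and the equivalence (up to a $W$-dependent constant) of the operator norm with an entrywise norm. The paper phrases the argument as a case split starting from $\|A\|=R$, while you argue directly from the right-hand side and close with the Frobenius norm, but these are just two presentations of the same elementary argument.
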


\begin{proof}
Denote $\|A\| = R$. Then $\|A\|_\infty \geq R/B_W$ (where $\|\cdot\|_\infty$ stands for the maximum of the absolute values of
the matrix entries).  There are two cases:
\begin{enumerate}
 \item There exists $j$ so that $|A_{jj}| \geq \frac{R}{3 B_W}$ for some $j$ (then the conclusion of the claim is obvious)
 \item There exist $j$ and $k$ so that $|A_{jk}| \geq \frac{R}{B_W}$, and $|A_{jj}|,|A_{kk}| < \frac{R}{3B_W}$. Then
\begin{multline*}|\langle A (e_j + e_k) , (e_j + e_k) \rangle| \\
    = | a_{jj} + a_{kk} + 2 a_{jk} |\geq 2 |a_{jk}| - |a_{kk}| - |a_{jj}| \geq \frac{R}{B_W}~. 
\end{multline*}
\end{enumerate}

\end{proof}

\begin{prop}\label{prop:4secondmoment}
Under the no-ac assumption, there exists $C > 0$ so that for any $n\geq1$ and $\eta > 0$ 
\[ \mathbb{E} N(N-1) \leq C \tau^{-2} K^{2n}~. \]
\end{prop}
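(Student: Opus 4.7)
The plan is to reduce to a per-pair estimate: since $N(N-1)=\sum_{x\neq x'\in S_n}\mathbbm{1}_{R_x\cap E_x\cap R_{x'}\cap E_{x'}}$ and there are at most $K^{2n}$ ordered pairs on $S_n$, it suffices to prove
\[ \mathbb{P}(R_x\cap E_x\cap R_{x'}\cap E_{x'})\leq C\tau^{-2} \]
uniformly in $x\neq x'\in S_n$ and $\eta>0$. This is an approximate-independence statement matching the first-moment bound $\mathbb{P}(E_x)\leq C'\tau^{-1}$ implicit in Proposition~\ref{prop:2firstmoment}.

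The strategy is two nested anti-concentration bounds. By the Schur--Banachiewicz representation used in the proof of Proposition~\ref{prop:2firstmoment},
\[ \langle G(x,x;z)\tilde v,\tilde v\rangle=(g_x-\sigma_x)^{-1},\qquad g_x=\lambda\langle V(x)\tilde v,\tilde v\rangle, \]
where $g_x$ is a real scalar Gaussian of density bounded by $C/\lambda$ uniformly in $\eta$, and (by Lemma~\ref{l:1}) is independent of $\sigma_x$; thus $E_x=\{|g_x-\sigma_x|\leq\tau^{-1}\}$. Gaussian anti-concentration yields $\mathbb{P}(E_x\mid\mathcal{H})\leq C\lambda^{-1}\tau^{-1}$ whenever a $\sigma$-algebra $\mathcal{H}$ determines $\tilde v$ and $\sigma_x$ while leaving $g_x$ a non-degenerate real Gaussian. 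Applied twice, first peeling off $g_x$ and then $g_{x'}$, this should yield $\mathbb{P}(R_x\cap E_x\cap R_{x'}\cap E_{x'})\leq(C\lambda^{-1}\tau^{-1})^2\,\mathbb{P}(R_x\cap R_{x'})\leq C'\tau^{-2}$.

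The chief obstacle is a circular measurability issue: the vector $\tilde v$ is defined through $G^{\mathcal{T}_x}(0,x_-;z)$ and so depends on $V(y)$ for every $y\neq x$, in particular on $V(x')$ including the ``$\tilde v'$-diagonal'' component $g_{x'}$; symmetrically $\tilde v'$ depends on $g_x$, and $\sigma_x$ involves the subtree containing $x'$. To break this coupling, I would invoke the no-a.c.\ hypothesis via Claim~\ref{cl:6} (so that $G(x,x;E+i0)$ is real symmetric) together with Claim~\ref{cl:7}, which bounds the operator norm of such a matrix by a maximum over the $O(W^2)$ deterministic quadratic forms $|\langle G(x,x;z)w,w\rangle|$ with $w\in\{e_j\}\cup\{e_j+e_k\}$. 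A union bound replaces $\tilde v$ and $\tilde v'$ by deterministic test vectors $w$ and $w'$; then the tree factorization of Claim~\ref{cl:1ident} at the common ancestor $u$ of $x,x'$ --- which splits $\mathcal{T}$ into subtrees with independent potentials --- makes it possible to construct $\mathcal{H}$ explicitly from the branch structure. The GOE invariance (Lemma~\ref{l:1}) then ensures that $g_x^{(w)}=\lambda\langle V(x)w,w\rangle$ and $g_{x'}^{(w')}$ remain independent real Gaussians under $\mathcal{H}$, so that the two-step anti-concentration produces the $\tau^{-2}$ bound; the $O(W^4)$ combinatorial factor from the union over $(w,w')$ is absorbed into $C$.
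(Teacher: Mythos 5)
Your reduction to a uniform per-pair bound $\mathbb{P}(R_x\cap E_x\cap R_{x'}\cap E_{x'})\leq C\tau^{-2}$ is where the argument breaks: that bound is not true uniformly in $x,x'\in S_n$. For nearby pairs the events are strongly correlated, and one cannot improve on $\mathbb{P}(\tilde E_x\cap\tilde E_y)\sim\tau^{-1}$. The circularity you correctly identify --- $\sigma_x$ depends on $V(x')$ through $G^{\mathcal{T}_x}$, and $\sigma_{x'}$ depends on $V(x)$ --- is not resolved by passing to deterministic test vectors. Even after replacing $\tilde v,\tilde v'$ by deterministic $w,w'$ via Claims~\ref{cl:6} and~\ref{cl:7} (which is indeed the first step of the paper's proof), you still have $\langle G(x,x;z)w,w\rangle=(g_x-\sigma_x)^{-1}$ with $\sigma_x$ a function of $g_{x'}=\lambda\langle V(x')w',w'\rangle$, and symmetrically. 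Hence no $\sigma$-algebra $\mathcal{H}$ can simultaneously leave $g_x$ a free Gaussian and determine the event $\tilde E_{x'}^{w'}$, and the two nested anti-concentration bounds do not close.

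What replaces this in the paper is the two-site Wegner-type estimate of \cite[Theorem~A.2]{AW}, which gives
\[ \mathbb{P}(\tilde E_x^j\cap\tilde E_y^{j'})\leq\frac{C}{\tau}\left\{\frac{C}{\tau}+\mathbb{E}\min\Bigl(1,\sum_{u\sim(x,j),\,v\sim(y,j')}\bigl|H(x,j;u)\,G^{(x,j;y,j')}(u,v;E+i\eta)\,H(v;y,j')\bigr|\Bigr)\right\}~. \]
The second term quantifies the correlation between the two sites and dominates $\tau^{-1}$ for nearby pairs, so the per-pair bound is genuinely not $O(\tau^{-2})$. Its saving grace is that it decays with graph distance: using $\min(1,|a|)\leq|a|^s$ together with Lemma~\ref{l:3.4} and Proposition~\ref{p:thm:3.2} one obtains $\mathbb{E}\|G(x,y;E+i\eta)\|^s\leq C'K^{-\frac{s}{2}\mathrm{dist}(x,y)}$. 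Summing over pairs at distance $2k$ (of which there are roughly $K^{n+k}$) and choosing $s=\frac{L(E)+2\delta}{\log K}\in(0,1)$ --- this is exactly where $\tau=e^{(L(E)+2\delta)n}$ and the hypothesis $L(E)<\log K$ enter --- the total comes out to $O(\tau^{-2}K^{2n})$. Your proposal misses both essential ingredients: the decoupling inequality that produces the explicit correlation term, and the fractional-moment decay of the off-diagonal Green function at the tuned exponent $s$.
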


\begin{proof}
Recall that 
\[ E_x = \left\{ | \langle G(x, x; E + i \eta) \tilde{v}, \tilde{v} \rangle| \geq \tau\right\}~,\]
therefore (by Claim~\ref{cl:7})
\[ E_x \subset \tilde{E}_x =
   \left\{ \|  G(x, x; E + i \eta)\| \geq \tau  \right\}
   \subset \bigcup_j \tilde{E}_x^j \, \cup \, \bigcup_{jk} \tilde{E}_x^{jk}~,
\]
where
\[ \tilde{E}_x^j = \left\{ | \langle G(x, x; E + i \eta) e_j, e_j \rangle| \geq \tau/C  \right\} \]
and
\[ \tilde{E}_x^{jk} = \left\{ | \langle G(x, x; E + i \eta) (e_j + e_k), (e_j + e_k) \rangle| \geq \tau/C \right\}~. \]
Therefore
\[\begin{split}
\mathbb{E} N(N-1) 
  &= \sum_{x,y \in S_n, x \neq y} \mathbb{P}(R_x \cap E_x \cap R_y \cap E_y) \\
  &\leq \sum \mathbb{P}(E_x \cap E_y)\\
  &\leq \sum \Big\{ \sum_{jj'} \mathbb{P}(\tilde{E}_x^j \cap \tilde{E}_y^{j'}) 
    + \sum_{jj'k'} \mathbb{P}(\tilde{E}_x^j \cap \tilde{E}_y^{j'k'}) \\
    &\qquad\qquad\qquad+ \sum_{jkj'} \mathbb{P}(\tilde{E}_x^{jk} \cap \tilde{E}_y^{j'}) 
    + \sum_{jkj'k'} \mathbb{P}(\tilde{E}_x^{jk} \cap \tilde{E}_y^{j'k'}) \Big\} \\
  &= \sum (I + II + III + IV).
\end{split}\]
Let us estimate the terms $I$ (the other terms are estimated in the same way). We apply \cite[Theorem~A.2]{AW}. It
yields:
\begin{multline*}
 \mathbb{P} (\tilde{E}_x^j \cap \tilde{E}_y^{j'}) \\
  \leq 
  \frac{C}{\tau} \left\{ \frac{C}{\tau} + \mathbb{E} \min\left(1, 
  \sum_{u \sim (x, j), v \sim (y, j')} \left| H(x,j; u) G^{(x,j; y,j')} (u, v; E+i\eta) H(v; y, j) \right|  \right) \right\}~.
\end{multline*}

Here $ H(x,j; u) $ and $H(v; y, j) $ are Gaussian random variables, independent of each other and of $G^{(x,j; y,j')}$,
the Green function corresponding to the operator obtained by erasing the vertices $(x,j)$ and $(y,j')$ of
$\mathcal{T} \times G$. The first term is of the desired form since the number of addends is bounded by $C_W K^{2n}$. 
For the second term we use the inequality 
\[ \min(1, |x|) \leq |x|^s~, \quad 0 \leq s \leq 1~, \]
and then estimate:
\[\begin{split}
&\mathbb{E} \sum_{u \sim (x, j), v \sim (y, j')} \left| H(x,j; u) G^{(x,j; y,j')} (u, v; E+i\eta) H(v; y, j) \right|^s\\
&\quad=  \sum_{u \sim (x, j), v \sim (y, j')}  \mathbb{E} |H(x,j; u)|^s \, \mathbb{E} |G^{(x,j; y,j')}(u, v; E+i\eta)|^s \,
  \mathbb{E} |H(v; y, j)|^s \\
&\quad\leq C \sum_{u \sim (x, j), v \sim (y, j')} \mathbb{E} |G^{(x,j; y,j')}(u, v; E+i\eta)|^s~.
\end{split}\]
If $u = (x, k)$, $v = (y,k')$ (where $k \neq j$, $k' \neq j'$), repeated application of Lemma~\ref{l:3.4} and
Proposition~\ref{p:thm:3.2} yields
\[\begin{split}
\mathbb{E} |G^{(x,j; y,j')}(u,v; E+i\eta)|^s
    &\leq C \mathbb{E} \|G(x,y;E+i\eta)\|^s \leq C' K^{- \frac{s}{2} \mathrm{dist}(x, y)}~.
\end{split}\]

Combining these estimates and taking $s = \frac{L(E)+2\delta}{\log K} \in (0, 1)$. we obtain
the desired bound. This completes the proof of Proposition~\ref{prop:4secondmoment}.

\end{proof}

\begin{prop}[Modified Theorem~4.6*]\label{prop:thm:4.6'*}
For almost all 
\[ E \in \sigma(H) \cap \left\{ L(E) < \log K \right\} \cap \left\{ \text{no-ac holds} \right\}~,\]
there exist $\delta, p_0 > 0$ and $n_0 \geq 0$ so that for all $n \geq n_0$
\[ \liminf_{\eta \to 0} \mathbb{P} \left\{ \max_{x \in S_n} \|G(0, x; E+i\eta\| \geq e^{\delta n} \right\} \geq p_0~. \]
\end{prop}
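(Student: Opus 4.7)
The plan is a standard second moment (Paley--Zygmund) argument, with Proposition~\ref{prop:2firstmoment} supplying the lower bound on $\mathbb{E} N$ and Proposition~\ref{prop:4secondmoment} the matching upper bound on $\mathbb{E} N(N-1)$. Throughout, we fix $E$ in the set where $L(E) < \log K$, the no-ac hypothesis holds, and both moment bounds are available for almost every $\eta > 0$; as in \cite[Corollary~4.10]{AW}, this excludes only a set of $E$ of measure zero.

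Choose $\delta > 0$ small enough that $L(E) + 2\delta < \log K$, and set $\tau = e^{(L(E)+2\delta)n}$ as in the definition of $E_x$. By Proposition~\ref{prop:2firstmoment}, for $n$ large and $\eta$ small,
\[ \mathbb{E} N \;\geq\; \frac{1}{C(\lambda)\tau} K^n \;=\; \frac{1}{C(\lambda)} \bigl( K e^{-L(E)-2\delta}\bigr)^n \;\longrightarrow\; \infty \]
as $n \to \infty$, since $K e^{-L(E)-2\delta} > 1$. Simultaneously, Proposition~\ref{prop:4secondmoment} gives
\[ \mathbb{E} N(N-1) \;\leq\; C\,\tau^{-2} K^{2n} \;=\; C \bigl( K e^{-L(E)-2\delta}\bigr)^{2n}~, \]
so $\mathbb{E} N(N-1) \leq C' (\mathbb{E} N)^2$ with a constant $C'$ depending only on $\lambda$ and $E$ (not on $n$ or $\eta$).

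Applying the Paley--Zygmund inequality in the form
\[ \mathbb{P}(N \geq 1) \;\geq\; \frac{(\mathbb{E} N)^2}{\mathbb{E} N^2} \;=\; \frac{(\mathbb{E} N)^2}{\mathbb{E} N + \mathbb{E} N(N-1)}~, \]
we obtain
\[ \mathbb{P}(N \geq 1) \;\geq\; \frac{(\mathbb{E} N)^2}{\mathbb{E} N + C'(\mathbb{E} N)^2} \;\xrightarrow[n \to \infty]{}\; \frac{1}{C'}~, \]
uniformly in sufficiently small $\eta > 0$. So there exist $n_0$ and $p_0 = 1/(2C')$ such that $\mathbb{P}(N \geq 1) \geq p_0$ for all $n \geq n_0$ and all small enough $\eta > 0$. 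By the observation following (\ref{eq:bdin2'}), the event $\{N \geq 1\}$ is contained in $\{\max_{x \in S_n} \|G(0,x;E+i\eta)\| \geq e^{\delta n}\}$, so the same lower bound holds for the latter probability, and passing to $\liminf_{\eta \to +0}$ completes the proof.

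The main subtlety, and the only place where the hypotheses of the proposition are genuinely invoked, is the uniformity in $\eta$ of the constant $C'$: the second moment estimate (Proposition~\ref{prop:4secondmoment}) uses the no-ac hypothesis through Claims~\ref{cl:6}--\ref{cl:7} to reduce $E_x$ to a union of $\tilde E_x^j$ and $\tilde E_x^{jk}$, and then invokes Lemma~\ref{l:3.4} and Proposition~\ref{p:thm:3.2} whose constants remain bounded as $\Im z \to +0$. Given that input, the Paley--Zygmund step is entirely routine.
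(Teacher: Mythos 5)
Your argument is correct and is essentially the same Paley--Zygmund / second-moment argument used in the paper: both divide the first and second moment bounds from Propositions~\ref{prop:2firstmoment} and~\ref{prop:4secondmoment} to control $\mathbb{E} N^2 / (\mathbb{E} N)^2$ uniformly in small $\eta$ and large $n$, then pass to $\mathbb{P}(N\geq 1)$. You spell out a few details the paper leaves implicit (that $\mathbb{E} N\to\infty$ under $L(E)+2\delta<\log K$, and that $\{N\geq1\}$ is contained in the target event via the observation after (\ref{eq:bdin2'})), but the route is identical.
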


\begin{proof}
By Proposition~\ref{prop:2firstmoment} and Proposition~\ref{prop:4secondmoment} there exist $C$,$\eta_0$ and  $n_0$ so that for
$n \geq n_0$ and $\eta \in (0, \eta_0)$
\[ \frac{\mathbb{E} N^2}{\left\{ \mathbb{E} N \right\}^2}
  = \frac{1}{\mathbb{E} N} + \frac{\mathbb{E} N(N-1)}{\left\{ \mathbb{E} N \right\}^2} \leq C~.\]
Therefore
\[ \mathbb{P} \left\{ N \geq 1 \right\} \geq \frac{\left\{ \mathbb{E} N\right\}^2}{\mathbb{E} N^2} \geq \frac{1}{C} \]
uniformly in $n \geq n_0$ and $\eta \in (0, \eta_0)$. 
\end{proof}

\begin{proof}[Proof of Theorem~\ref{thm2'}] We argue by contradiction: if the no-ac hypothesis holds for a given
$E \in \sigma(H)$, the conclusion of Proposition~\ref{prop:thm:4.6'*} implies that
\[ \sum \|G(0,x; E+i0)\|^2 = \infty \]
with positives probability and hence almost surely. Proposition~\ref{prop1:sw} conludes the proof.
\end{proof}

\section{Proof of Theorem~\ref{thm2}}

Denote
\[ \Gamma(y) = \Gamma(y; E+i\eta) = G^{\mathcal{T}_{y_-}}(y,y; E+i\eta)~;
\quad \tilde\Gamma(y) = \frac{\Gamma(y) - \Gamma(y)^*}{2i} \]
(the latter is the matrix analogue of $\Im \Gamma$ from \cite{AW}). Theorem~\ref{thm2} will follow
from the following statements:

\begin{lemma}[Lemma~4.4*]\label{l:4.4*}
For any $A > 0$, if 
\[ \mathbb{P}  \left\{ \|\tilde\Gamma\| \geq A \right\} \geq q > 0 \]
for some $q \in (0, 1)$, then 
\[ \mathbb{P} \left\{ \|\tilde\Gamma\| \geq \frac{A}{R} \right\} \to 1 \]
as $R \to \infty$, uniformly in $\eta > 0$.
\end{lemma}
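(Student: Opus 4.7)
The plan is to derive a one-step amplification from the Schur complement recursion, and then iterate it.

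By Claim~\ref{cl:1ident}, $\Gamma(y) = T(y)^{-1}$ where $T(y) := U(y) - z - \sum_{w \in \mathcal{N}_y^+} \Gamma(w)$. Taking matrix-valued imaginary parts gives the recursion
\[
\tilde\Gamma(y) = \Gamma(y)^*\Bigl(\eta \mathbbm{1} + \sum_{w \in \mathcal{N}_y^+} \tilde\Gamma(w)\Bigr)\Gamma(y).
\]
Retaining only one term of the positive-semidefinite sum and using $\tilde\Gamma(w_0) \succeq \|\tilde\Gamma(w_0)\|\, ee^*$ (with $e$ a top eigenvector of $\tilde\Gamma(w_0)$), one gets the deterministic bound
\[
\|\tilde\Gamma(y)\| \geq \|\tilde\Gamma(w_0)\|\,\|\Gamma(y)^*e\|^2 \geq \|\tilde\Gamma(w_0)\|/\|T(y)\|^2
\]
for any forward neighbor $w_0$ of $y$ (using $\|(T^*)^{-1}e\| \geq 1/\|T\|$).

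Next I would show that $\|T(y)\|$ has a fractional $s$-moment bounded uniformly in $\eta > 0$. Since $T(y) = \lambda V(y) + A - z - \sum_{w \in \mathcal{N}_y^+} \Gamma(w)$ with $\mathbb{E}\|V(y)\|^s$ finite (Gaussian tails), this reduces to the bound $\mathbb{E}\|\Gamma(w)\|^s \leq C(s)$ uniformly as $\Im z \to 0$ furnished by the upper-bound part of Lemma~\ref{l:3}. Chebyshev then yields $\epsilon(R) := \mathbb{P}\{\|T(y)\|^2 > R\} \leq CR^{-s/2}$, uniformly in $\eta$.

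Combining, by independence of the $K$ forward subtrees of $y$, if $q := \mathbb{P}\{\|\tilde\Gamma\| \geq A\}$, then a Bonferroni inequality gives
\[
\mathbb{P}\{\|\tilde\Gamma\|\geq A/R\} \geq \mathbb{P}\{\exists w \in \mathcal{N}_y^+:\|\tilde\Gamma(w)\|\geq A\} - \epsilon(R) \geq \bigl(1 - (1-q)^K\bigr) - \epsilon(R).
\]
Since $\tilde\Gamma$ has the same law at every non-root vertex, this one-step estimate can be iterated. Writing $r_n := 1 - \mathbb{P}\{\|\tilde\Gamma\|\geq A/R^n\}$ gives $r_{n+1} \leq r_n^K + \epsilon(R)$. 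For $K \geq 2$ the iteration $r \mapsto r^K + \epsilon$ drives $r_n$ doubly-exponentially down to its stable fixed point of size $O(\epsilon(R))$. Given $\delta > 0$, choose $R$ large enough that $\epsilon(R) < \delta/4$, and then $n = O(\log\log(1/\epsilon(R)))$ so that $r_n < \delta$; since $R^n$ is arbitrarily large and the bound is uniform in $\eta$, this yields $\mathbb{P}\{\|\tilde\Gamma\|\geq A/R^n\}\geq 1-\delta$ uniformly in $\eta$, as required.

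The main obstacle is establishing the uniform-in-$\eta$ moment bound on $\|\Gamma(w)\|$, which is the content of (the upper-bound part of) Lemma~\ref{l:3} and ultimately rests on the Aizenman--Molchanov inequalities applied to the diagonal Schur complement. A secondary technical point is that the events $\{\|\tilde\Gamma(w)\|\geq A\}$ and $\{\|T(y)\|\leq \sqrt R\}$ are correlated (the $\Gamma(w)$'s enter both), which is why I combine them via a Bonferroni inequality rather than independence.
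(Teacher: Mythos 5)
Your argument is correct. You derive the matrix analogue of the scalar recursion $\operatorname{Im}\Gamma(0) \ge |\Gamma(0)|^2\,\operatorname{Im}\Gamma(w_0)$, namely $\|\tilde\Gamma(y)\| \ge \|\tilde\Gamma(w_0)\|/\|T(y)\|^2$, combine it with a Wegner-type fractional-moment bound on $\|T(y)\|$ (equivalently on $\|\Gamma\|$, which is indeed the content of inequality (\ref{eq:weg}) inside the proof of Lemma~\ref{l:3} and is $\eta$-uniform), handle the correlation between $\{\|\tilde\Gamma(w)\|\ge A'\}$ and $\{\|T(y)\|^2\le R\}$ by Bonferroni, and then iterate $r_{n+1}\le r_n^K+\epsilon(R)$. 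The dynamics of the map $r\mapsto r^K+\epsilon$ on $[0,1]$ (for $K\ge 2$) does exactly what you claim: for $\epsilon$ small enough that $1-q$ lies below the unstable fixed point, the orbit decreases monotonically to the stable fixed point, which is $O(\epsilon(R))$, giving the claimed uniform-in-$\eta$ convergence.

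The route is, however, not quite the one the paper points to. The paper's proof is by reference to \cite[Lemma~4.4]{AW}, with a remark that one must ``work with the fastest Lyapunov exponent rather than the slowest one.'' That remark indicates an argument built on the $n$-step inequality of Lemma~\ref{l:**}, $\tilde\Gamma(0)\ge G(0,x;z)\,\tilde\Gamma(y)\,G(0,x;z)^*$, whose lower bound requires control of the smallest singular value of the long product $G(0,x;z)$ — i.e.\ the fastest Lyapunov exponent. Your proof instead performs a single-step amplification and iterates at the level of probabilities, so no Lyapunov exponent ever enters; the only analytic input is the $\eta$-uniform fractional-moment (Wegner) bound for the diagonal Green function. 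This makes the argument somewhat more elementary and, incidentally, makes transparent why the fastest-vs-slowest distinction that the paper warns about is avoidable here: the lower bound $\|\Gamma(y)^*e\|\ge 1/\|T(y)\|$ uses only the single-site Schur complement, not any multi-step product. One small imprecision worth fixing: you cite ``the upper-bound part of Lemma~\ref{l:3}'' for $\mathbb{E}\|\Gamma\|^s\le C$; Lemma~\ref{l:3} is stated as a multiplicativity estimate, and what you actually need is the intermediate Wegner bound (\ref{eq:weg}) established in its proof (or equivalently the cited \cite[A.1]{AW}), so the citation should point there.
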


The proof is identical to that of \cite[Lemma~4.4]{AW} (note however that, unlike the rest
of the current paper, one has to work with the fastest Lyapunov exponent rather than the slowest one).

\begin{prop}[Theorem~4.6*]\label{prop:thm:4.6*}
For almost all 
\[ E \in \sigma(H) \cap \left\{ L(E) < \log K \right\} \cap \left\{ \text{no-ac holds} \right\}~,\]
there exist $\delta, p_0 > 0$ and $n_0 \geq 0$ so that for all $n \geq n_0$ 
\begin{multline*}
 \liminf_{\eta \to +0} \mathbb{P} \big\{ 
  \exists x \in S_n, y \in \mathcal{N}_x^+ \, \big| \, 
  \|G^{\mathcal{T}_x}(0, x_-, E+i\eta)\| \geq e^{-(L(E)+\delta)n}~, \,
  \|\tilde\Gamma\| \geq \xi(p)~, \,\\
  \left| \left\langle G(x,x; E+i\eta) w_{\max}(\tilde\Gamma(y)), 
       w_{\max}(G^{\mathcal{T}_x}(0,x_-; E+i\eta)^* G^{\mathcal{T}_x}(0,x_-; E+i\eta)) 
      \right\rangle\right| \\ \geq e^{+(L(E)+2\delta)n} \Big\}  
  \geq q > 0~,
\end{multline*}
where
\begin{enumerate}
 \item $q$ may depend on $\delta$ and $p$, but not on $\eta$ and $n$;
 \item $\xi(p) = \inf \left\{ t  \, \mid \, \mathbb{P}\{\|\tilde\Gamma\| \geq t \} \geq p \right\}$ is  the $p$-th quantile of $\|\tilde\Gamma\|$;
 \item $w_{\max}$ denotes the eigenvector asociated with the maximal eigevalue.
\end{enumerate}
\end{prop}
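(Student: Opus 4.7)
The plan is to mimic the proof of Proposition~\ref{prop:thm:4.6'*} with a refined counting variable that incorporates the extra geometric requirements. Define
\[ N = \sum_{x \in S_n} \sum_{y \in \mathcal{N}_x^+} \mathbbm{1}_{R_x \cap T_y \cap F_{x,y}}~, \]
where $R_x$ is the event from before, $T_y = \{\|\tilde\Gamma(y)\| \geq \xi(p)\}$, and $F_{x,y}$ is the event that the matrix element in the statement exceeds $\tau = e^{(L(E)+2\delta)n}$. The target bound on $\mathbb{E} N$ would then yield the proposition via Paley--Zygmund and a second-moment bound analogous to Proposition~\ref{prop:4secondmoment}.

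For the first-moment lower bound, I would fix $x \in S_n$ and $y \in \mathcal{N}_x^+$ and condition on $\{V(z) : z \neq x\}$. Under this conditioning, $R_x$, $T_y$ are determined, and so are both test vectors
\[ w_1 = w_{\max}(\tilde\Gamma(y))~, \qquad w_2 = w_{\max}(G^{\mathcal{T}_x}(0,x_-)^* G^{\mathcal{T}_x}(0,x_-))~. \]
Let $P$ be the projection onto $\mathrm{span}(w_1,w_2)$; by Claim~\ref{cl:1ident} and the Schur--Banachiewicz formula
\[ P G(x,x) P = (g - \sigma)^{-1}~, \qquad g = \lambda P V(x) P~, \]
where $g$ is a GOE-like symmetric block on a subspace of dimension $\leq 2$, and by the GOE invariance used in Lemma~\ref{l:1}, $\sigma$ is independent of $g$. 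The matrix element $\langle G(x,x) w_1, w_2\rangle$ becomes a (at most) $2\times 2$ matrix element of $(g-\sigma)^{-1}$. By Cramer's rule this is a ratio of a linear and a quadratic polynomial in the Gaussian entries, so on the event $\{\|\sigma\| \leq t\}$ a direct computation (the same pattern as in Proposition~\ref{prop:goe}, using that the determinant of $g-\sigma$ has a bounded density near $0$ at scale $\tau^{-1}$) yields
\[ \mathbb{P}\{F_{x,y} \mid \sigma\} \, \mathbbm{1}_{\|\sigma\|\leq t} \geq \frac{1}{C_{\lambda,t}\tau}~. \]

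Combining these ingredients, $\mathbb{E} N \gtrsim K^{n+1} \cdot \mathbb{P}(R_x \cap T_y \cap \{\|\sigma\|\leq t\}) / (C_{\lambda,t}\tau)$. Here $R_x$ is independent of $T_y$ (they depend on disjoint subtrees), and control of $\|\sigma\|$ follows by combining Lemma~\ref{l:2} with an $s$-moment bound on $P G^{\mathcal{T}_x}(y,y)P$ conditional on $T_y$; the latter is the delicate point, since $\sigma$ contains the term $PG^{\mathcal{T}_x}(y,y)P$ which is forced to have large imaginary part by $T_y$. The resolution is that by choosing $p$ close to $1$ via Lemma~\ref{l:4.4*}, the quantile $\xi(p)$ can be chosen so that typical values of $\|\tilde\Gamma(y)\|$ are comparable to $\xi(p)$, hence $\|\sigma\|$ remains bounded with probability $\geq 3/4$ under the conditioning; this is where I expect the main obstacle to lie. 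The second-moment bound $\mathbb{E} N(N-1) \leq C\tau^{-2} K^{2n}$ is obtained exactly as in Proposition~\ref{prop:4secondmoment} since the events $E_x$ there dominate $F_{x,y}$ (as $\|G(x,x)\| \geq |\langle G(x,x)w_1,w_2\rangle|$), so no new work is required. Paley--Zygmund then closes the argument.
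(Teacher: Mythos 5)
Your overall scaffolding---refined counting variable $N$, first and second moment bounds, Paley--Zygmund---matches the paper, and your second-moment step is correct: since $|\langle G(x,x)w_1,w_2\rangle| \leq \|G(x,x)\|$ one has $F_{x,y} \subset \tilde{E}_x$, so the argument of Proposition~\ref{prop:4secondmoment} applies without change and yields $\mathbb{E}N(N-1) \leq C\tau^{-2}K^{2n}$. The gap is in the first-moment lower bound, and it is a real one. You claim that Cramer's rule expresses $\langle (g-\sigma)^{-1}w_1,w_2\rangle$ as a ratio of a linear and a quadratic polynomial in the Gaussians, and that bounded density of $\det(g-\sigma)$ near zero, together with ``the pattern'' of Proposition~\ref{prop:goe}, gives $\mathbb{P}\{F_{x,y}\mid\sigma\}\,\mathbbm{1}_{\|\sigma\|\leq t}\geq \frac{1}{C\tau}$. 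Neither ingredient delivers this. Proposition~\ref{prop:goe} is an $s$-moment lower bound $\mathbb{E}|\cdot|^s\geq c$, which does not imply the tail bound $\mathbb{P}\{|\cdot|\geq\tau\}\gtrsim\tau^{-1}$. And bounded density of the determinant near zero does not control the matrix element: the Cramer numerator (a cofactor) is built from the same Gaussian entries as the determinant, and near the locus $\det(g-\sigma)=0$ the $2\times 2$ adjugate is rank one, so the numerator can be small precisely where the determinant is small whenever $w_1,w_2$ sit close to the kernel or cokernel. The quantitative anti-concentration for an arbitrary bilinear form of $(g-\sigma)^{-1}$ is exactly what needs to be established; you have assumed it.

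The paper resolves this with a parallelogram-law reduction, which is the key idea your proposal is missing. One writes
\[ \langle G(x,x)v,w\rangle = \tfrac{1}{4}\left[\langle G(x,x)(v+w),(v+w)\rangle - \langle G(x,x)(v-w),(v-w)\rangle\right]~, \]
passes to the orthonormal basis $e_1 = (v+w)/\|v+w\|$, $e_2 = (v-w)/\|v-w\|$, and notes that $\{|\langle Ge_1,e_1\rangle|\geq 2\tau,\ |\langle Ge_2,e_2\rangle|\leq\tau\}\subset E_x$. Diagonal matrix elements in this adapted basis are then tractable: $PG(x,x)P = \left(\lambda\left(\begin{smallmatrix}V_{11}&0\\0&V_{22}\end{smallmatrix}\right) - X\right)^{-1}$, where $X$ (which absorbs $\lambda V_{12}$ and the Schur complement) is independent of $V_{11},V_{22}$, and one splits on whether $|X_{12}|\leq\tau^{-1/2}$ or not. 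In the first case $G_{11}$ and $G_{22}$ approximately decouple, and one combines the scalar Proposition~\ref{prop:2firstmoment} argument with a \cite[Theorem~A.2]{AW}-type upper bound on $\mathbb{P}\{|G_{11}|\geq 2\tau,\ |G_{22}|\geq\tau\}\lesssim\tau^{-3/2}$; in the second case a short deterministic computation shows that $|G_{11}|\geq 2\tau$ and $|V_{22}-c|>2b$ force $|G_{22}|\leq\tau$. You should supply this reduction or an explicit substitute for it. The remaining ingredients you sketch---independence of $R_x$ and $T_y$, and controlling $\|\sigma\|$ via the cutoff $t$ and the quantile $\xi(p)$---are handled as in Proposition~\ref{prop:2firstmoment}.
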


The following lemma will be used both in the proof and in the application of Proposition~\ref{prop:thm:4.6*}.
\begin{lemma}\label{l:**}
The (self-adjoint) matrix $\tilde\Gamma(0)$ admits the lower bound
\[ \tilde\Gamma(0) \geq \sum_{x \in S_n} \sum_{y \in \mathcal{N}_x^+}  G(0, x; E+i\eta) \tilde\Gamma(y) G(0, x;E+i\eta)^* \]
in the sense of quadratic forms.
\end{lemma}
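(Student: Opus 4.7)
The plan is to derive a one-step quadratic-form inequality at an arbitrary vertex and then iterate it $n$ times down the tree.

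For the one-step inequality, I use Claim~\ref{cl:1ident} applied inside the sub-graph $\mathcal{T}_{v_-}$, in which $v$ is a root and only forward neighbors survive, to write
\[
\Gamma(v) \;=\; \left(U(v) - z - \sum_{w \in \mathcal{N}_v^+} \Gamma(w)\right)^{-1} \;=:\; M^{-1}
\]
(with the convention $\Gamma(0) = G(0,0;E+i\eta)$ at the root, where $\mathcal{N}_0 = \mathcal{N}_0^+$). Because $U(v)$ is Hermitian and $\Im z = \eta$,
\[
\Im M \;=\; -\eta\,I \;-\; \sum_{w \in \mathcal{N}_v^+} \tilde\Gamma(w),
\]
and the identity $M^{-1} - (M^*)^{-1} = M^{-1}(M^* - M)(M^*)^{-1}$ then gives
\[
\tilde\Gamma(v) \;=\; \eta\,\Gamma(v)\Gamma(v)^* \;+\; \sum_{w \in \mathcal{N}_v^+} \Gamma(v)\,\tilde\Gamma(w)\,\Gamma(v)^* \;\geq\; \sum_{w \in \mathcal{N}_v^+} \Gamma(v)\,\tilde\Gamma(w)\,\Gamma(v)^* \qquad (\star)
\]
in the quadratic-form sense, the first (positive semi-definite) term being dropped.

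I then iterate $(\star)$ by induction on $n$. The base case $n=0$ is just $(\star)$ at $v=0$. For the inductive step, assume the bound at level $n$ and, for each pair $x \in S_n$, $y \in \mathcal{N}_x^+$, apply $(\star)$ at $v = y$ and conjugate by $G(0,x)$:
\[
G(0,x)\,\tilde\Gamma(y)\,G(0,x)^* \;\geq\; \sum_{z \in \mathcal{N}_y^+} G(0,x)\,\Gamma(y)\,\tilde\Gamma(z)\,\Gamma(y)^*\,G(0,x)^*.
\]
The telescoping product formula in Claim~\ref{cl:1ident} applied to the path $0 \prec x \prec y$ yields $G(0,y) = G(0,x)\Gamma(y)$, so the right-hand side equals $\sum_{z \in \mathcal{N}_y^+} G(0,y)\,\tilde\Gamma(z)\,G(0,y)^*$. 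Summing over $x \in S_n$ and $y \in \mathcal{N}_x^+$, and using that the parent map identifies these pairs bijectively with $S_{n+1}$, advances the induction.

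I do not expect a substantive obstacle. The only care needed is (i) verifying the sign in the matrix imaginary-part formula so that $(\star)$ is genuinely a quadratic-form inequality, and (ii) making sure that $(\star)$ is applied in the correct sub-graph at each step, so that only forward neighbors appear and the $G(0,\cdot)$-factorization propagates cleanly through the induction.
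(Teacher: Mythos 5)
Your proof is correct and follows essentially the same route as the paper's: apply the resolvent identity $M^{-1}-(M^*)^{-1}=M^{-1}(M^*-M)(M^*)^{-1}$ to $\Gamma(v)=M^{-1}$ via the Schur complement formula of Claim~\ref{cl:1ident}, drop the positive-semidefinite $\eta\,\Gamma(v)\Gamma(v)^*$ term, and iterate down the tree using $G(0,y)=G(0,x)\Gamma(y)$. You have in fact spelled out the iteration step (which the paper leaves implicit) cleanly and correctly, including the bookkeeping that identifies pairs $(x\in S_n,\,y\in\mathcal{N}_x^+)$ with $S_{n+1}$.
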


\begin{proof}
From the resolvent identity,
\[\begin{split} 
\tilde\Gamma(0) &= \frac{\Gamma(0) - \Gamma(0)^*}{2i}    \\
  &= \frac{1}{2i} \Gamma(0) \left\{ \eta + \sum_{y \in \mathcal{N}_x^+} (\Gamma(y) - \Gamma(y)^*) \right\} \Gamma(0)^*\\
  &\geq \sum_{y \in \mathcal{N}_x^+} \Gamma(0)  (\Gamma(y) - \Gamma(y)^*) \Gamma(0)^* \\
  &= \sum_{y \in \mathcal{N}_x^+} G(0, 0; E+i\eta)  (\Gamma(y) - \Gamma(y)^*) G(0, 0; E+i\eta)^*~. 
\end{split}\] 
This yields the statement for $n = 0$. The statement for larger $n$ follows by iteration.
\end{proof}

\begin{proof}[Proof of Proposition~\ref{prop:thm:4.6*}]
Denote 
\[\begin{split}
I_x &= \left\{ \|\tilde\Gamma(x)\| \geq \xi(p) \right\}~, \\
R_x &= \left\{ \|G^{\mathcal{T}_x}(0,x_-;E+i\eta)\| \geq e^{-(L(E)+\delta)n}\right\}~, \\
E_x &= \left\{ \left| \left\langle G(x,x; E+i\eta)v; w\right\rangle \right| \geq \tau \right\}~,
\end{split}\]
where 
\[ v = w_{\max}(\tilde\Gamma(y))~, \, w = w_{\max}(G^{\mathcal{T}_x}(0,x_-; E+i\eta)^* G^{\mathcal{T}_x}(0,x_-; E+i\eta))~.\]
Then Proposition~\ref{prop:thm:4.6*} states that
\[ \liminf_{\eta \to +0} \mathbb{P} \left\{ \bigcup_x I_x \cap R_x \cap E_x \right\} \geq q > 0~. \]
Denote 
\[ N = \sum_{x \in S_n} \mathbbm{1}_{I_x \cap R_x \cap E_x}~. \]
As in the proof of Theorem~\ref{thm2'}, we shall prove that
\[ \frac{\mathbb{E} N^2}{\left\{ \mathbb{E} N \right\}^2} \leq C~. \]
The upper bound on $\mathbb{E} N(N-1)$ follows from the argument of Proposition~\ref{prop:4secondmoment}. Indeed,
in the notation of the proof of Proposition~\ref{prop:4secondmoment},
\[ \mathbb{P} (I_x \cap R_x \cap E_x \cap I_y \cap R_y \cap E_y)
  \leq \mathbb{P}(E_x \cap E_y) \leq \mathbb{P} (\tilde{E}_x \cap \tilde{E}_y)~,
\]
hence 
\[ \mathbb{E} N(N-1) \leq C \tau^{-2} K^{2n}~. \]
To bound $\mathbb{E} N$ from below, we need to show that
\[ \mathbb{P}(I_x \cap R_x \cap E_x) \geq C \tau^{-1}~. \]
By the parallelogram law,
\[ \langle G(x,x) v, w \rangle = 
  \frac{1}{4} \left[ \langle G(x,x)(v+w), (v+w) \rangle - \langle G(x,x)(v-w), (v-w) \rangle\right]~. \]
In our case, $\| v \| = \|w \| = 1$, hence $v+w \perp v-w$. Without loss of generality we may assume
that $\| v + w \| \geq \|v-w\|$, then $\|v+w\| \geq 2 \geq \|v-w\|$. Set $e_1 = (v+w)/\|v+w\|$, 
$e_2 = (v-w)/\|v-w\|$. Then
\[ \left\{ |\langle G(x,x) v, w \rangle| \geq \tau \right\} \supset
   \left\{ |\langle G(x, x) e_1, e_1 \rangle | \geq 2\tau, \, |\langle G(x, x) e_2, e_2 \rangle | \leq \tau \right\}~.
\]
No generality is lost if we assume that $e_1$ and $e_2$ are the first two vectors of the standard basis.
Let $P$ be the projection onto $e_1, e_2$. Then
\[ \left(\begin{array}{cc} G_{11} & G_{12} \\ G_{12} & G_{22} \end{array}\right)
  = P G(x, x) P = \left( \lambda \left(\begin{array}{cc} V_{11} & 0 \\ 0 & V_{22}\end{array}\right) - X \right)^{-1}~,\]
where
\[ X = \left(\begin{array}{cc} a & b \\ b & c \end{array}\right) \]
is independent of $V_{11}$ and $V_{22}$. Consider two cases:
\begin{enumerate}
 \item $|b| \leq 1/\sqrt{\tau}$. Then the argument of Proposition~\ref{prop:2firstmoment} yields
\[ \mathbb{P}_{V_{11},V_{22}} \left\{ |G_{11}|\geq 2\tau \right\} \geq \frac{1}{C\tau}~, \]
whereas \cite[Theorem~A.2]{AW} yields
\[ \mathbb{P}_{V_{11},V_{22}} \left\{ |G_{11}|\geq 2\tau~, \, |G_{22}|\geq \tau \right\} \leq \frac{C}{\tau^{3/2}}~. \]
Therefore
\[ \mathbb{P}_{V_{11},V_{22}} \left\{ |G_{11}|\geq 2\tau~, \, |G_{22}|\leq \tau \right\} \geq \frac{1}{C'\tau}~. \]
\item $|b| > 1/\sqrt{\tau}$. If $|G_{22}|\geq \tau$, then 
\[ |(V_{11}-a) - b^2/(V_{22} - c)| = |G_{22}|^{-1} \leq \frac{1}{\tau}~, \]
therefore 
\[ |V_{11} - a| \leq \frac{1}{\tau} + \left| \frac{b^2}{V_{22}-c} \right|~.\] 
If in addition $|V_{22} - c| > 2b$, then
\[ |V_{11} - a| \leq \frac{b}{2} + \frac{1}{\tau} \leq \frac{2b}{3}~. \]
Therefore
\[ \left| \frac{V_{22}-c}{V_{11}-a} \right| \geq \frac{2b}{2b/3} = 3~. \]
This implies
\[\begin{split}
\frac{1}{|G_{22}|}
  &= \left| V_{22} - c - \frac{b^2}{V_{11}-a} \right| \\
  &= \left| \frac{V_{22}-c}{V_{11}-a} \right| \left| V_{11} - a - \frac{b^2}{V_{22}-c} \right| 
  \geq \frac{3}{|G_{11}|}~.
\end{split}\]
Hence in this case
\begin{multline*}
\left\{ V_{11}, V_{22} \, \mid \, |G_{11}| \geq 2\tau, |G_{22}| \leq \tau \right\} \\
 \supset \left\{ V_{11}, V_{22}\, \mid \, \frac{1}{3\tau} < \frac{1}{|G_{11}|} < \frac{1}{2\tau}~, \, |V_{22}| > 2b \right\}~,
\end{multline*}
and the probability of this event is again $\geq C^{-1}(b) \tau^{-1}$. The rest of the argument 
follows the proof of Proposition~\ref{prop:2firstmoment}.

\end{enumerate}

\end{proof}

\begin{proof}[Proof of Theorem~\ref{thm2}]
Theorem~\ref{thm2} follows immediately from Lemma~\ref{l:4.4*}, Proposition~\ref{prop:thm:4.6*}, and
Lemma~\ref{l:**}.
\end{proof}

\subsection*{Acknowledgement} I am grateful to Michael Aizenman and to Simone Warzel for numerous helpful
conversations, in particular, for the explanations pertaining to the work~\cite{AW}.

\end{document}